\documentclass[aps,twocolumn,superscriptaddress,nofootinbib,a4paper,longbibliography]{revtex4-2}

\usepackage{times}
\usepackage{epsfig}
\usepackage{amsfonts}
\usepackage{amsmath}
\usepackage{amsthm}
\usepackage{amssymb}					
\usepackage{amsthm}
\usepackage{dsfont}
\usepackage{bm}
\usepackage{mathtools}
\usepackage{color}
\usepackage{multirow}
\usepackage[normalem]{ulem}
\newcommand{\stkout}[1]{\ifmmode\text{\sout{\ensuremath{#1}}}\else\sout{#1}\fi}
\usepackage{latexsym}
\usepackage{mathrsfs}
\usepackage{natbib}
\usepackage{verbatim}
\usepackage[T1]{fontenc}
\usepackage{float}
\usepackage{graphicx}
\usepackage{subcaption}
\usepackage{xcolor}
\usepackage{physics}
\usepackage{soul}
\usepackage{caption}
\usepackage{subcaption}

\usepackage{centernot}
\usepackage[font=small,labelfont=bf]{caption}

\usepackage{xspace}

\newtheorem{theorem}{Theorem}

\newtheorem{corollary}[theorem]{Corollary}

\newtheorem{result}{Result} 

\newtheorem{definition}{Definition}

\newcommand{\floor}[1]{\lfloor #1 \rfloor}

\newcommand{\bracket}[3]{\langle#1|#2|#3\rangle}

\usepackage{amssymb}
\usepackage{amsthm}
\usepackage{mathtools} 
\usepackage[customcolors]{hf-tikz} 
\usetikzlibrary{patterns}
\usetikzlibrary{matrix,decorations.pathreplacing}

\pgfkeys{tikz/mymatrixenv/.style={decoration={brace},every left delimiter/.style={xshift=8pt},every right delimiter/.style={xshift=-8pt}}}
\pgfkeys{tikz/mymatrix/.style={matrix of math nodes,nodes in empty cells,left delimiter={[},right delimiter={]},inner sep=1pt,outer sep=1.5pt,column sep=2pt,row sep=2pt,nodes={minimum width=20pt,minimum height=10pt,anchor=center,inner sep=0pt,outer sep=0pt}}}
\pgfkeys{tikz/mymatrixbrace/.style={decorate,thick}}

\usepackage[colorlinks=true,linkcolor=blue,citecolor=magenta,urlcolor=blue]{hyperref}

\begin{document}
	

	\title{Simulating quantum measurements without superposition devices}

	\author{Gabriele Cobucci}
	\affiliation{Physics Department and NanoLund, Lund University, Box 118, 22100 Lund, Sweden.}

	\author{Alexander Bernal}
	\affiliation{Physics Department and NanoLund, Lund University, Box 118, 22100 Lund, Sweden.}
	\affiliation{Instituto de f\'isica Te\'orica, IFT-UAM/CSIC,
		Universidad Aut\'onoma de Madrid, Cantoblanco, 28049 Madrid, Spain.}

	\author{Roope Uola}
	\affiliation{Department of Physics and Astronomy, Uppsala University, Box 516, 751 20 Uppsala, Sweden.}
	\affiliation{Nordita, KTH Royal Institute of Technology and Stockholm University,
		Hannes Alfv\'ens v\"ag 12, 10691 Stockholm, Sweden}
	
	\author{Armin Tavakoli}
	\affiliation{Physics Department and NanoLund, Lund University, Box 118, 22100 Lund, Sweden.}

	\begin{abstract}
		Superposition is the core feature that sets quantum theory apart from classical physics. Here, we investigate whether sets of quantum measurements can be modelled by using only devices that are classical, in the sense that they  only resolve orthogonal measurement outcomes.  This leads us to introduce classical measurement models, which we show to be intermediate between the notion of commutative measurements and joint measurability. Towards understanding these models we (i) identify exact noise and loss rates at which all projective measurements admit a classical model,  (ii) propose numerical methods to construct classical models for finite sets of measurements, and (iii) show how to construct witnesses of genuine  superposition properties in quantum measurements. In addition, we show that classical measurement models also have operational implications in non-disturbance tasks where sequential quantum measurements are implemented with classical side-information.  Our work provides a new approach to the role of superposition in quantum measurements. 
	\end{abstract}

	\date{\today}
	\maketitle

	\section{Introduction} The superposition principle is at the heart of quantum theory's break with classical ideas. One possible way to view superposition is to  have a pre-selected basis  with respect to which it can arise \cite{Streltsov2017}. A different and less restrictive view is to adopt no priviledged basis and ask whether there exists some basis in which superpositions vanish. This is captured by the textbook notion of commuting observables. For projective measurements, commutation is equivalent to joint measurability (JM), by which there is a single measurement from which the others are obtained by classical post-processing \cite{Guhne2023}. However, for general measurements, which are  positive operator-valued measures (POVMs), commutation is a strictly stronger condition than JM. It motivates a natural question: to what extent can classical measurement devices reproduce the measurements in quantum theory?

	On the one hand,  one may argue that commutation no longer captures the full power of classical measurement models. For example, the noisy Pauli observables $\eta \sigma_Z$ and $\eta \sigma_X$ fail to commute for any $\eta>0$, but one would not expect observables that are almost exclusively noise ($\eta\approx 0$) to be useful for quantum information. On the other hand, JM addresses whether a set of measurements can be reduced to a single measurement, but not whether the latter requires superposition features. This makes JM natural for certain black-box input/output experiments \cite{Wolf2009, Quintino2014, Uola2015} but it also makes it  too powerful for classical notions of measurement.

	Here, we introduce classical measurement models with the goal of reproducing the Hilbert space representation of any set of POVMs implemented by a given quantum measurement device. The main idea behind these models  is that a random variable is used to select between a number of measurement devices whose output is subjected to a conventional post-processing. Importantly, each of these measurement devices is classical in the standard sense of commuting observables: there always exists a basis in the Hilbert space of the system in which all operations of the device are diagonal. The possibility of stochastically calling the different classical devices allows the model to account for  a strictly larger class of measurements than those captured by commutation. In contrast, the classical measurement models are a subset of JM due to being prohibited from directly using superposition features. These relationships are illustrated in Fig~\ref{fig_sets}. 
	\begin{figure}[t!]
		\centering
		\includegraphics[width=1\columnwidth]{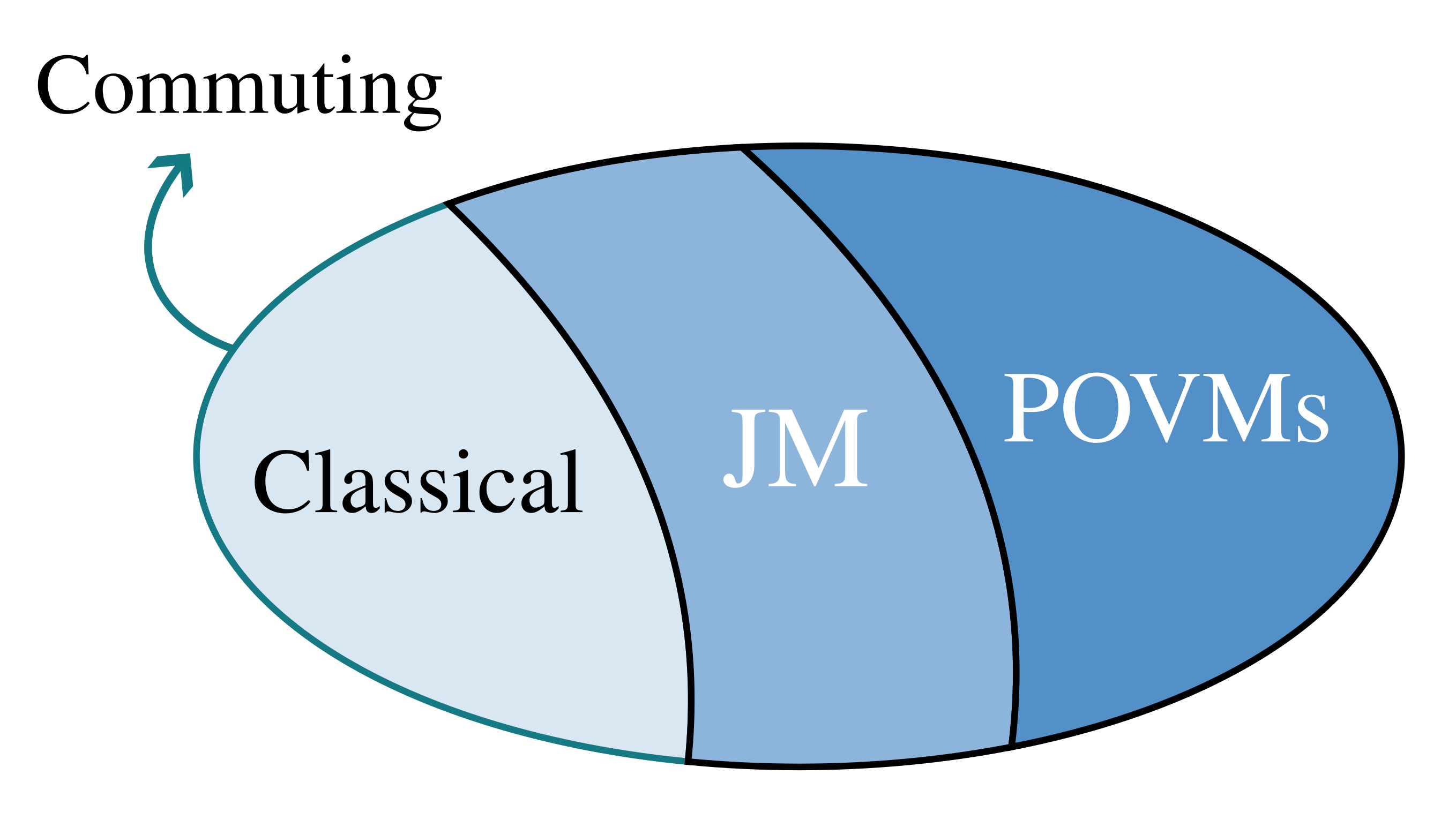}
		\caption{\textbf{Classifying measurement models.} Commuting observables, joint measurability and POVMs represent increasingly broad notions of measurement. We introduce classical  models based on measurement devices without superposition features. These are intermediate between  commutation and joint measurability.}
		\label{fig_sets}
	\end{figure}
	
	In what follows, we first formalise  the classical measurement models. Then, we discuss to what extent they can account for all the projective measurements in $d$-dimensional quantum theory. We provide an exact solution to this problem both in terms of the depolarisation noise and the measurement loss needed to make a classical model possible. Subsequently, we consider more realistic quantum measurement devices, which only implement some finite set of well-selected measurements. For any finite set of POVMs, we show how to numerically search for a classical model. Furthermore, provided no such model exists, we  show how to derive witness-type proofs of this fact. Then, we go beyond the foundational interest in understanding the role of superposition in quantum measurements and consider the operational implications of classical measurement models. Specifically, we consider the well-known problem of whether a pair of measurements can be implemented in sequence without the former disturbing the latter \cite{Guhne2023}. We consider this problem in the presence of a classical hidden variable, shared between  the devices tasked with implementing the respective measurements, and we show that a classical measurement model implies the possibility of a non-disturbing implementation. Classical models therefore operationally contrast JM, as the latter turns out only to be a  necessary condition for  non-disturbance. We conclude with a discussion about the relationship between the classical measurement models proposed in this work and the notion of classical state models recently proposed in \cite{Cobucci2025}.

	\begin{figure}[t!]
		\centering
		\includegraphics[width=1\columnwidth]{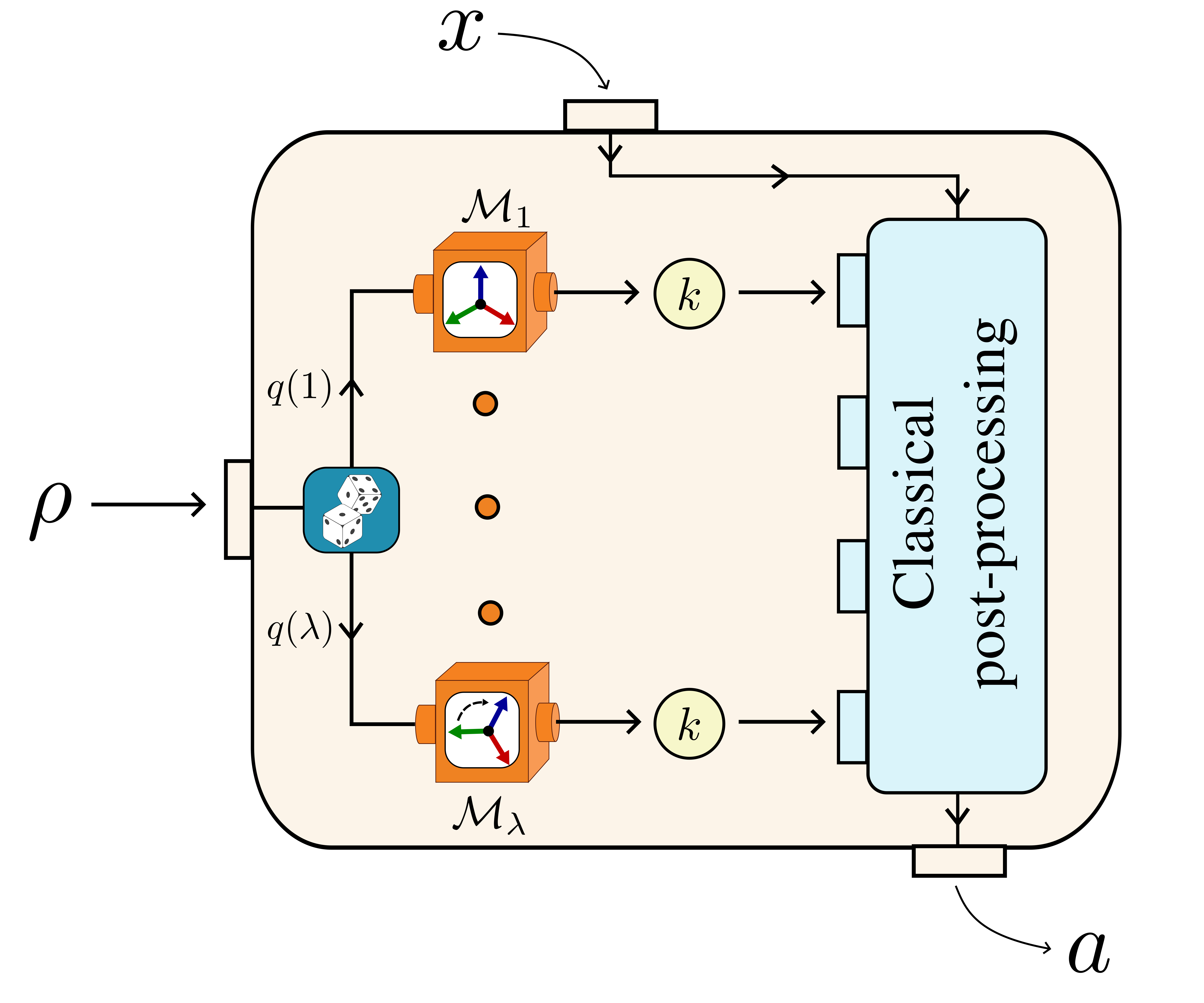}
		\caption{\textbf{Classical measurement model.} A random variable, $\lambda$, is used to direct the incoming quantum state, $\rho$, to a measurement device called $\mathcal{M}_\lambda$. This device performs a measurement in a fixed basis, $\{E_{k|\lambda}\}_k$. The outcome $k$, the device label $\lambda$ and the label of the quantum measurement $x$,  are post-processed into the output  $a$.}
		\label{fig_simulation_model}
	\end{figure}
	
\section{Classical measurement models}\label{section_classical_model}
Consider a set of POVMs, $\mathbf{M}\equiv \{M_{a|x}\}_{a,x}$, on a $d$-dimensional Hilbert space $\mathcal{H}$, where $a=1,\ldots,o$ labels the outcome and $x=1,\ldots,n$ labels the measurement choice. We ask whether  $\mathbf{M}$ can be simulated in a model that uses only the following resources: (i) arbitrary pre-processing, (ii) arbitrary post-processing, and (iii) measurement devices that individually have no superposition features on the Hilbert space of the system. This model is illustrated in Fig~\ref{fig_simulation_model} and we now proceed to formalise it.

We consider an arbitrary quantum state, $\rho$, and whether it is possible to simulate its measurement statistics for any one of the measurements appearing in the set $\mathbf{M}$.  Our classical measurement model for the implementation of $\mathbf{M}$ is as follows. We use a random variable, $\lambda$,  sampled from a probability density function $q(\lambda)$. Depending on its value, $\rho$ is directed to a measurement device named $\mathcal{M}_\lambda$. Since $\mathcal{M}_\lambda$ cannot have superposition properties, it is restricted to operate in a specific orthonormal basis of $\mathcal{H}$. We call that basis $\{\ket*{e^{(\lambda)}_k}\}_{k=1}^{d}$ and label by $E_{k|\lambda} = \ketbra*{e_{k}^{(\lambda)}}$ the associated projectors. The outcome of this measurement, $k$, and the device's label $\lambda$, are then fed into a post-processing unit. This unit takes as input also the label, $x$, of the specific quantum measurement in $\mathbf{M}$ that we aim to implement. Thus, the post-processing is represented by some conditional distribution $p(a|x,k,\lambda)$, where $a$ is the final outcome of the simulation. Putting it all together, we arrive at the definition of a classical measurement model.
\begin{definition}[Classical measurement model]
	The quantum measurements $\{M_{a|x}\}_{a,x}$ are said to admit a classical  model if there exists a probability density function $q(\lambda)$, rank-one projective measurements $\{E_{k|\lambda}\}$ and a conditional probability distribution $p(a|x,k,\lambda)$ such that
	\begin{equation}\label{Clmodel}
		M_{a|x}=\int d\lambda \, q(\lambda) \, \sum_{k=1}^{d} p(a|x,k,\lambda) E_{k|\lambda}.
	\end{equation}
\end{definition}
Note that  the model may use an (uncountably) infinite number of devices $\mathcal{M}_\lambda$, thereby motivating the integral in the definition. However, the model can be simplified without loss of generality. Specifically, we can eliminate the post-processing at the price of allowing each simulation measurement to be of arbitrary rank.

\begin{result}[Simplified classical model]\label{res1}
	The quantum measurements $\{M_{a|x}\}_{a,x}$ admit a classical model if and only if there exists a decomposition
	\begin{equation}\label{Clmodel2}
		M_{a|x} = \int d\lambda \, q(\lambda) F_{a|x,\lambda},
	\end{equation}
	where $\{F_{a|x,\lambda}\}$ is a set of  measurements satisfying $F_{a|x,\lambda} F_{a'|x,\lambda}=F_{a|x,\lambda}\delta_{a,a'}$ and $[F_{a|x,\lambda},F_{a'|x',\lambda}] = 0$.
\end{result} 
\begin{proof}
The proof is given in Supplementary Material.
\end{proof}	
	
Let us now make a few observations about the properties of classical measurement models.
\begin{enumerate}
	\item The space of  all sets of quantum measurements that admit a classical model is closed and convex.
	
	\item If $\mathbf{M}$ contains at least one extremal POVMs then $q(\lambda)$ must be deterministic. This reduces  the classical measurement model in Eq~\eqref{Clmodel2} to standard commutativity.
	
	\item  If $\mathbf{M}$ is classical then it is also JM. By definition,  $\mathbf{M}$ is JM if there exists a POVM $\{G_\mu\}_\mu$, the so-called parent POVM, such that $M_{a|x} = \sum_{\mu}p(a|x,\mu) G_{\mu}$ \cite{Guhne2023}. To recover this from Eq~\eqref{Clmodel}, we can select $\mu = (\lambda,k)$ and define the POVM $G_{\mu} \coloneqq G_{(\lambda,k)} = q(\lambda)E_{k|\lambda}$.  However, the converse does not hold, i.e.~JM does not imply measurement classicality. Indeed, a single extremal non-projective measurement is trivially JM but non-commuting.
	
	\item Classical measurement models are equivalent to JM when the parent POVM is restricted to admit a projective simulation of the type defined in \cite{Oszmaniec2017}. In particular, if $\mathbf{M}$ is only a single POVM, classical measurement models are equivalent to the existence of such a projective simulation. Dedicated methods have been developed for projective simulable measurements \cite{Khandelwal2025, cobucci2025MaxNP, Brinster2025}, but these do not appear to directly generalise when $\mathbf{M}$ contains more than a single measurement. 

\end{enumerate}

Let us conclude the introduction of the classical measurement models with a simple example of how they are more powerful than standard commuting measurements. For this, we consider an arbitrary pair of non-commuting bases, $\{\ket{f_a}\}_{a=1}^{d}$ and $\{\ket{h_a}\}_{a=1}^{d}$. We apply to both bases the depolarisation map  $\Phi_v(X)=v X+(1-v)\Tr\left(X\right)\frac{\mathds{1}}{d}$, where $v\in[0,1]$ is the visibility. Thus, $M_{a|1}=\Phi_v(\ketbra{f_a})$ and $M_{a|2}=\Phi_v(\ketbra{h_a})$. While the two measurements  commute only for $v=0$, there exists a  classical model whenever $v\leq \frac{1}{2}$, independently of the dimension $d$. To show that, we use only two measurement devices: $\mathcal{M}_{1}$ measures in $\{\ket{f_a}\}_a$  and $\mathcal{M}_{2}$ measures in $\{\ket{h_a}\}_a$. We call them with equal probability, $q(1) = q(2) = \frac{1}{2}$, and select the post-processing as  $a=k$ if $x=\lambda$ while $a$  is random if $x\neq \lambda$. Inserted into Eq~\eqref{Clmodel}, this gives the set $\mathbf{M}$ when $v=\frac{1}{2}$. This model  shows that many non-commuting observables admit a classical model in arbitrary large dimension. However, the model is usually far from optimal, i.e.~even larger values of $v$ are typically possible with a classical description of $\mathbf{M}$. We will return to this matter in section~\ref{subsec2} and present classical models for visibilities  higher than $v=\frac{1}{2}$. 

\section{Characterising classical measurement models}
In this section we present results on  classical models for simulating  measurements in quantum theory. We begin in section \ref{subsec1} with discussing classical simulation of all projective measurements in quantum theory under the influence of  noise and/or loss. This provides insight to the capabilities of classical models for explaining general quantum predictions. We then turn  to the more practical setting in which a quantum device is capable of implementing any measurements in a finite set. For this scenario, we present in section~\ref{subsec2}  a numerical method for explicitly constructing classical measurement models. Finally, in section~\ref{subsec3} we address the main practical question of how to certify that a quantum measurement device eludes a classical model, by showing how to design witness tests for this purpose.
	

\subsection{Classical models for all projective measurements under noise and loss}\label{subsec1}
A conceptually appealing question is to consider the capability of  classical models to simulate  all projective  measurements in quantum theory. Of course, a perfect simulation is impossible. Therefore, we set out to determine the critical amount of decoherence that quantum measurements must be subjected to in order to enable a classical simulation. The answer to such a question evidently depends on the specific form of decoherence. Here, we will consider two natural cases, namely when the measurements are subjected to isotropic noise and when they are subject to particle losses.

The set of all projective quantum measurements under the influence of isotropic noise corresponds to choosing  $\mathbf{M}^\text{noise}_{v}=\{\Phi_v\left(M_{a}\right)\}$, where $\Phi_v$ implements a noisy version of the target measurement and $M$  runs over  all the (uncountably many) bases of $d$-dimensional Hilbert space. Our goal is to determine a necessary and sufficient condition on $v\in[0,1]$ for the existence of a classical measurement model. The set of all bases can equivalently be associated with the set of unitaries $U\in \text{SU}(d)$, so we can  label the measurements by $M_{a|U}$, where $U$ serves as the measurement index. Applying the depolarisation map therefore gives us the target set  $\mathbf{M}^\text{noise}_v= \{vM_{a|U} + \frac{1-v}{d}\openone\}_{U\in \text{SU}(d)}$. Next, we consider also the situation of having losses in the already noisy measurement. This is described by choosing $\mathbf{M}^\text{loss}_{\eta,v}=\{L_\eta\left(\Phi_v\left(M_{a|U}\right)\right)\}_{U\in \text{SU}(d)}$, where $L_\eta$ implements the target measurement with efficiency $\eta$, while assigning the failed outcome $\varnothing$ otherwise. Specifically, the action is
\begin{equation}
	L_\eta (M_a) = \begin{cases}
		& \eta M_a \!\!\quad\quad\quad \text{if} \quad a=1,\ldots,d\\
		& (1-\eta)\mathds{1}\quad \text{if} \quad a=\varnothing
	\end{cases}.
\end{equation}
Our goal is to determine the necessary and sufficient condition of $\eta$ as a function of $v$ for the existence of a classical measurement model. The next result provides the noise and loss thresholds at which all projective measurements admit a classical model.

\begin{result}[Simulating all projective measurements]\label{res2}
The sets $\mathbf{M}^\text{noise}_v$ and $\mathbf{M}^\text{loss}_{\eta,v}$ respectively admit a classical measurement model if and only if $v\leq v^*$ and $\eta(v)\leq\eta^*(v)$, where
\begin{align}\label{Classicality_all_meas}
	& v^*=\frac{H_{d}-1}{d-1}\\ \label{Classicality_all_meas2}
	& \eta^*=d(1-v)^{d-1}.
\end{align}
Here, $H_{n} = \sum_{k=1}^{n}\frac{1}{k}$ is the Harmonic number. The formula for the loss is valid when $v>1/2$.
\end{result}
\begin{proof}
Classical measurement models are a subset of JM. Therefore, it follows immediately that $v^*$ can be no larger than the right-hand-side of Eq~\eqref{Classicality_all_meas}, because this bound has been derived for the joint measurability of all projective measurements under isotropic noise \cite{Uola2014}. To prove that the right-hand-side of Eq~\eqref{Classicality_all_meas} also is a lower bound on $v^*$, we construct in Supplementary Material an explicit classical simulation model drawing on the ideas introduced in \cite{Werner89,Jones07}.

To show the critical efficiency we again use that the right-hand-side of Eq~\eqref{Classicality_all_meas2} is known to be exact for the joint measurability of $\mathbf{M}^\text{loss}_{\eta,v}$ when $v > 1/2$ \cite{Sekatski2024}. Then, we mimic the associated JM-model to obtain a classical measurement model that saturates the bound. Notably, this argument works also when $v\leq 1/2$, although in this case it leads to different formulas of $(v,\eta)$; we refer to the Supplementary material for those expressions and to \cite{Sekatski2024} for a complete derivation.
\end{proof}

It is interesting to note that in the limit of large $d$, the critical visibility in Eq~\eqref{Classicality_all_meas} decays as $v \sim \frac{\log d}{d}$. This shows that it becomes rapidly expensive for a classical model to account for all projective measurements in high-dimensional quantum theory. In particular, for qubits ($d=2$) we have $v=\frac{1}{2}$. Using the quantum steering findings of Refs~\cite{Yujie2024, Renner2025}, we can generalise this result to all qubit non-projective measurements.

\begin{corollary}
Consider the set of all qubit POVMs when passed through the depolarisation map $\Phi_v$. It admits a classical measurement model if and only if 
\begin{equation}
	v\leq \frac{1}{2}.
\end{equation}
\end{corollary}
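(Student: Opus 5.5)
The proof has two directions: necessity, which is immediate, and sufficiency, which needs a construction.

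\emph{Necessity.} The set of all noisy qubit POVMs contains the set of all noisy projective qubit measurements $\mathbf{M}^\text{noise}_v$. By Result~\ref{res2} with $d=2$, the latter admits a classical model only if $v\leq v^*=(H_2-1)/(2-1)=\tfrac12$. Any subset of a classical set is classical: simply restrict the post-processing $p(a|x,k,\lambda)$ to the relevant labels $x$. So $v\leq\tfrac12$ is necessary. This is also consistent with observation 3, since classicality implies JM.

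\emph{Sufficiency at $v=\tfrac12$.} By observation 1 the classical set is convex, and $\Phi_v=\Phi_{v/(1/2)}\circ\Phi_{1/2}$ mixes in further white noise. It is therefore enough to treat $v=\tfrac12$, provided the trivial measurements $\{\Tr(\cdot)\,p(a)\openone\}$ are included; those are trivially classical. The plan is to import the qubit steering result of Refs.~\cite{Yujie2024,Renner2025}. That result shows the two-qubit Werner state of visibility $\tfrac12$ admits a local hidden state model for all POVMs. Equivalently, $\Phi_{1/2}(M_{a|x})$ is jointly measurable for all qubit POVMs, with a parent of the form
\begin{equation}
G_{\lambda}=q(\lambda)\,2\ketbra{\psi_\lambda}\,d\lambda
\end{equation}
up to normalisation, built from a distribution over pure states $\ket{\psi_\lambda}$. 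The response function is $p(a|x,\lambda)$. The point is that in a Werner/Jones-type construction the parent is a continuous family of rank-one elements, and I would group it into projective measurements. Each pure state $\ket{\psi_\lambda}$ is paired with its orthogonal complement $\ket{\psi_\lambda^\perp}$, giving the basis $E_{k|\lambda}$ with $k\in\{1,2\}$. The classical model then uses device $\mathcal{M}_\lambda$ measuring in this basis, with post-processing $p(a|x,k,\lambda)\coloneqq p(a|x,\psi^{(k)}_\lambda)$.

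\emph{Main obstacle.} For this regrouping to reproduce Eq.~\eqref{Clmodel}, the measure $q(\lambda)$ on the Bloch sphere must be symmetric under antipodal reflection, i.e.\ $q(\psi)=q(\psi^\perp)$. This holds for the uniform measure, but the qubit POVM models of Refs.~\cite{Yujie2024,Renner2025} may use non-uniform or POVM-dependent distributions. I would first check that their model uses the uniform Haar measure on pure states with a deterministic response function. Any model can be symmetrised by averaging with its antipodal counterpart $\psi\mapsto\psi^\perp$ while keeping the response function defined pointwise, but this changes the mixed marginal. So the crucial step is to verify that the hidden-state ensemble in the LHS model is a uniform, or at least antipodally symmetric, distribution over pure states. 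Given that, the parent POVM decomposes into projective measurements exactly as in the $d=2$ case of the Result~\ref{res2} construction, which yields a classical measurement model at $v=\tfrac12$.
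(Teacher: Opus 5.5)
\textbf{Where the proposal stands.} Your route is the paper's. Necessity comes from Result~\ref{res2} at $d=2$, which the paper leaves implicit. Sufficiency comes from importing the $v=\tfrac12$ parent POVM of Refs.~\cite{Yujie2024,Renner2025} and using that a classical model is the same as JM with a projectively simulable parent. The paper settles the last point simply by citing \cite{Renner2025} for an explicit projective parent.

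\textbf{The gap.} You leave exactly this point open (``I would first check\ldots''). The fallback you describe, averaging with $\psi\mapsto\psi^\perp$ while keeping the response attached to the same point, fails as you say. The reflected copy reproduces $\tr(M_a)\openone-\Phi_{1/2}(M_a)$ instead of $\Phi_{1/2}(M_a)$, so the average collapses to the trivial effects $\tfrac12\tr(M_a)\openone$. As written, sufficiency is therefore conditional on an unverified property of the cited model.

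\textbf{How to close it.} Your fallback can be repaired so that nothing about the cited model beyond joint measurability at $v=\tfrac12$ is needed. Let $\theta(X)=\tr(X)\openone-X$ on qubit operators. This map flips the Bloch vector and sends $|\psi\rangle\langle\psi|$ to $|\psi^\perp\rangle\langle\psi^\perp|$. It maps POVMs to POVMs, because $\tr(M_a)\openone-M_a\geq 0$ for $2\times 2$ effects and $\theta(\openone)=\openone$. It also commutes with $\Phi_v$.

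Take any parent and refine it into rank-one pieces, giving $\Phi_{1/2}(M_a)=\int w(d\psi)\,p(a|M,\psi)\,|\psi\rangle\langle\psi|$ with $\int w(d\psi)=2$. Write this decomposition for the POVM $\theta(M)$ and apply $\theta$ to both sides. This yields $\Phi_{1/2}(M_a)=\int w(d\psi)\,p(a|\theta(M),\psi)\,|\psi^\perp\rangle\langle\psi^\perp|$. Averaging the two decompositions gives
\begin{equation*}
\Phi_{1/2}(M_a)=\int \frac{w(d\psi)}{2}\,\Big[p(a|M,\psi)\,|\psi\rangle\langle\psi|+p(a|\theta(M),\psi)\,|\psi^\perp\rangle\langle\psi^\perp|\Big].
\end{equation*}
This is Eq.~\eqref{Clmodel} with the following ingredients:
\begin{itemize}
\item the device label $\lambda=\psi$ is distributed according to $w/2$;
\item device $\lambda$ measures in the basis $\{\psi,\psi^\perp\}$;
\item outcome $\psi$ is post-processed with $p(a|M,\psi)$, and outcome $\psi^\perp$ with the response of the reflected POVM, $p(a|\theta(M),\psi)$.
\end{itemize}
The difference from your attempt is that the response on the reflected point is taken from the reflected POVM rather than kept pointwise.

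\textbf{An alternative that works in any dimension.} You can instead Haar-twirl a rank-one parent, which is allowed because the target set is unitarily invariant, and coarse-grain the responses. This makes $w$ uniform, which is exactly a Haar-random basis measurement. For the unitarily invariant sets in Result~\ref{res2}, this shows that JM already implies a classical model, which explains why the thresholds coincide.

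\textbf{The reduction to $v=\tfrac12$.} Your reduction is fine. More directly, $\Phi_v(M)=\Phi_{1/2}(\Phi_{2v}(M))$, and $\Phi_{2v}(M)$ is again a POVM.
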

\begin{proof}
	Classical models are equivalent to JM with projectively simulable parent POVM. Since an explicit projective parent POVM for the set of all qubit POVMs for $v\leq \frac{1}{2}$ is provided in \cite{Renner2025}, the result is direct.
\end{proof}
It still remains an open problem whether the criterion in Eq~\eqref{Classicality_all_meas} holds for general POVMs when $d>2$.

\subsection{Search for classical measurement models}\label{subsec2}
Real-life quantum devices do not implement all projective measurements allowed in quantum theory, but instead only the few measurements relevant for a specific quantum technology purpose. Therefore, we now  consider the more realistic situation in which $\mathbf{M}$ is a finite set of POVMs. To this end, we provide a numerical method  to search for classical measurement models for arbitrary finite sets of target measurements $\{M_{a\lvert x}\}$. We adopt the depolarisation visibility as a quantifier of their classical simulability. That is, we search for the largest $v$ such that  $\{\Phi_v(M_{a\lvert x})\}$ admits a classical model.  To construct the simulation, let the basis in which device $\mathcal{M}_{\lambda}$ operates be $E_{k|\lambda} = U_{\lambda}\ketbra{k}U_{\lambda}^{\dagger}$, for some unitary $U_\lambda$.  Use the  chain rule of probability to define $\tilde{q}(a,\lambda|x,k) \equiv q(\lambda)p(a|x,k,\lambda)$. For a given choice of $\lbrace U_{\lambda} \rbrace_{\lambda}$, the best classical model becomes a  linear program,
\begin{equation}\label{LP}
	\begin{aligned}
		\max_{v,q,\tilde{q}} & \quad v\\
		\text{s.t.} &\quad \Phi_v(M_{a\lvert x}) =  \sum_{\lambda} \sum_{k=1}^{d} \tilde{q}(a,\lambda|x,k) E_{k\lvert \lambda}, \quad \forall a,x,\\
		& \quad \sum_{a}\tilde{q}(a,\lambda|x,k) = q(\lambda), \quad \forall \lambda,x,k,\\
		& \quad \sum_{\lambda} q(\lambda) = 1, \qquad \tilde{q}(a,\lambda|x,k) \geq 0, \quad \forall a,\lambda,x,k.
	\end{aligned}
\end{equation}
Our implementation of this program is available at  \cite{github-code_simu-witness}. In this program, we have the freedom to select the set of unitaries. Since it is not straightforward how to best make this selection, we use a relatively large number of random unitaries.
\begin{table}[h!]
	\begin{tabular}{|c|c|c|c|c|c|c|}
		\hline
		$\,$ d $\,$ & $\,$ Set $\,$ & $\, N_{\lambda} \,$ & $\,$ Numerical $\,$ & $\,$ Eq~\eqref{Classicality_all_meas} $\,$ & $\,$ Witness $\,$ \\
		\hline
		2 & $\mathbf{M}_1$ & 20000 & 0.7605 & 0.5$^\dagger$ & 0.7729\\
		\hline
		2 & $\mathbf{M}_2$ & 10000 & 0.7071 & 0.5 & 0.7071\\
		\hline
		2 & $\mathbf{M}_3$ & 10000 & 0.5774 & 0.5$^\dagger$ & 0.5774\\
		\hline
		3 & $\mathbf{M}_2$ & 20000 & 0.6457 & 0.5 & 0.6667* \\
		\hline
		7 & $\mathbf{M}_4$ & 3000 & 0.2705 & 0.2655$^\dagger$ & -\\
		\hline
		
	\end{tabular}
	\caption{Threshold visibilities for classical model computed from \eqref{LP}. $\mathbf{M}_1$ is the set of the five qubit SIC-POVMs obtained from the compund of five tetrahedra \cite{five_tetrahedra}. $\mathbf{M}_2$ and $\mathbf{M}_3$ are the sets of two and three MUBs, respectively, in dimension $d$. $\mathbf{M}_4$ is the set of all MUBs in dimension $d=7$. $N_{\lambda}$ is the number of unitaries used to perform the simulation. The results are compared with the analytical model for a pair of non-commuting bases introduced at the end of Section \ref{section_classical_model} (or the one in \eqref{Classicality_all_meas}, denoted with $\dagger$) and the bounds derived from the state-discrimination witness \eqref{state_discrimination_witness} (the * denotes the value computed from the SDP \eqref{SDP_witness}). The "-" indicates the cases in which our desktop could not evaluate \eqref{SDP_witness}.}\label{TabEndMatter}
\end{table}

In Table~\ref{TabEndMatter} we provide several examples to showcase the practicality of the method. For instance, we can showcase its potential precision by considering the sets $\mathbf{M}_2$ and $\mathbf{M}_3$ of two and three qubit MUBs, respectively. Running the optimisation on a standard machine and selecting $N_{\lambda} = 10000$ random unitaries returns us visibility bounds which we will analitycally prove to be optimal in the next section. Simulations for higher dimensional sets are also possible. For instance, using up to $N_{\lambda} = 20000$ random unitaries, we can outperform the analytical model introduced at the end of Section \ref{section_classical_model} for pairs of non-commuting bases in dimension $d=3$. Similarly, let $\mathbf{M}_4$ be the set of all MUBs in dimension $d=7$. We can exceed the worst-case bound in Eq~\eqref{Classicality_all_meas} already by sampling only $N_{\lambda} = 3000$ unitaries.


\subsection{Certifying superposition in measurements}\label{subsec3}
We now turn to the complementary question of showing that a given set of measurements  $\mathbf{M} = \lbrace M_{a|x} \rbrace_{a,x}$ admits no classical model. This can be interpreted as a proof that the measurements have genuine quantum superposition. We show how to design such proofs via witness tests. These tests take place in the prepare-and-measure scenario, which is a standard experimental setting for certifying quantum devices \cite{Brask2026}.
	
Consider that we have a preparation device that can generate any one of the states  $\{\rho_1,\ldots,\rho_m\}$. Letting $z$ label our choice of state, we use $\rho_z$ to probe our measurement device for its $x$'th setting. By Born's rule, the resulting probability distribution is $p(a|z,x) = \tr(\rho_z M_{a|x})$. We seek to determine inequalities of the form 
\begin{equation}\label{Witness_form}
	W(\mathbf{M}) = \sum_{a,z,x} c_{azx} \tr(\rho_z M_{a|x}) \leq \beta,
\end{equation}
where $c_{azx}$ are some real coefficients and $\beta$ is a bound satisfied by all classical measurement models. Then, a violation of the inequality falsifies classical models. 
	
We now show how to determine $\beta$ for any choice of $\mathbf{M}$, $\lbrace \rho_{1},\ldots,\rho_{m} \rbrace$ and $\{c_{azx}\}$. Assume that $\mathbf{M}$ admits a classical model as in \eqref{Clmodel}. Due to the linearity of $W$, its optimal value is reachable for a deterministic choice of $\lambda$, i.e.~we need only to consider a single classical device in the simulation which performs a basis measurement $\lbrace E_{k}\rbrace_{k}$.  Hence, we have
$W=\sum_{k,a,x} p(a|x,k)\tr\left(E_k \mathcal{O}_{ax}\right)$, where we defined  $\mathcal{O}_{ax}=\sum_{z} c_{azx} \rho_z$. Next, we write the post-processing as a convex combination of deterministic distributions $p(a|x,k) = \sum_{\gamma} q(\gamma) D_{\gamma}(a|x,k)$, where $D_{\gamma}(a|x,k) \equiv \delta_{a,\gamma(x,k)}$ \cite{TaylorMarkov}. Selecting the deterministic strategy for which the maximum is achieved, we get
\begin{equation}\label{maxstep}
		\beta = \max_{\gamma} \max_{\lbrace E_{k} \rbrace} \sum_{k=1}^{d} \tr(\sum_{a,x}D_{\gamma}(a|x,k) \mathcal{O}_{ax} E_{k}).
\end{equation}
The remaining task is to address the maximisation over $\{E_k\}$. For qubit measurements, we provide an exact analytical solution in the Supplementary Material. This leads us to the following result.

	
\begin{result}[Qubit measurement witnesses]
Let $\mathbf{M}$ be a set of classically simulable qubit POVMs. For any choice of qubit states $\{\rho_z\}_z$ and real coefficients $c_{azx}$, the witness quantity $W\left(\mathbf{M}\right)$ defined in Eq~\eqref{Witness_form} satisfies
	\begin{equation}\label{beta_witness_2}
	\beta= \max_{\gamma}\ \tr(\mathcal{O}_{+}^{(\gamma)})+\sqrt{\tr(\mathcal{O}_{-}^{(\gamma)})^2-4\det(\mathcal{O}_{-}^{(\gamma)})},
\end{equation}
where $\mathcal{O}_{\pm}^{(\gamma)}=\frac{1}{2}\sum_{a,x}\left(D_{\gamma}(a|x,1)\pm D_{\gamma}(a|x,2)\right)\mathcal{O}_{ax}$. This bound is always tight.
\end{result}
Notice that evaluating the right-hand-side of Eq~\eqref{beta_witness_2} is straightforward, since the range of $\gamma$ is finite. 

For systems with dimension $d > 2$, since we do not have an analytical solution, we rely on semidefinite programming (SDP) relaxations \cite{Tavakoli2024} to address the maximization in Eq~\eqref{maxstep}.

Specifically, we relax the rank-one projective measurements $\lbrace E_{k} \rbrace_{k}$ to unit trace  measurements $\lbrace N_{k} \rbrace_{k}$ and compute the SDP
	\begin{equation}\label{SDP_witness}
		\begin{aligned}
			g_{\gamma} \equiv \max_{\lbrace N_{k} \rbrace} & \quad \sum_{k=1}^{d} \tr(\sum_{a,x}D_{\gamma}(a|x,k) \mathcal{O}_{ax} N_{k})\\
			\text{s.t.} &\quad N_{k} \succeq 0, \quad \tr(N_{k}) = 1, \quad \forall k,\\
			& \quad \sum_{k=1}^{d} N_{k} = \openone,
		\end{aligned}
	\end{equation}
	for each deterministic strategy $\gamma$. The bound on $\beta$ is obtained by selecting the largest value of $g_{\gamma}$ over all possible $\gamma$. Our implementation of the procedure is available at \cite{github-code_simu-witness}.

To illustrate the usefulness of the witness method, label the states as $\rho_z = \rho_{a,x}$, where $z=(a,x)$ for $a=1,\dots,o$ and $x=1,\dots,n$. Then, consider the following set of qubit POVMs. The first one, $\mathbf{M}_1$, is a set of five symmetric informationally complete (SIC) POVMs (where the Bloch vectors of the SICs correspond to the vertices of the compound of five tetrahedra \cite{five_tetrahedra}). The second and the third sets, $\mathbf{M}_2$ and $\mathbf{M}_3$, consist of two and three qubit MUBs \cite{Durt2010}, respectively. Select the states $\rho_{a,x}^{(i)}$ to be the normalised rank-1 operators associated with the measurements in each set $\mathbf{M}_i$ for $i \in \lbrace 1,2,3 \rbrace$ and choose $c_{azx}=c_{axby}=\delta_{a,b}\delta_{x,y}$. We can construct three different witnesses as state-discrimination protocol
\begin{equation}\label{state_discrimination_witness}
		W_{\text{st.-discr.}}^{(i)}(\mathbf{M})=\sum_{a,x}\tr(\rho_{a,x}^{(i)}M_{a|x}).
\end{equation}
The maximum value of the witness is achieved by choosing $\mathbf{M}$ to be exactly the measurement sets $\mathbf{M}_{i}$ for $i \in \lbrace 1,2,3 \rbrace$, i.e. $W_{\text{st.-discr.}}^{(1)}(\mathbf{M}_1)=10,\,W_{\text{st.-discr.}}^{(2)}(\mathbf{M}_2)=4$ and $W_{\text{st.-discr.}}^{(3)}(\mathbf{M}_3)=6$. The bounds satisfied by all measurement sets which admit a classical model can be computed from \eqref{beta_witness_2} to be 
\begin{equation}\label{Witness_Analyt_Bound}
		\begin{aligned}
			&\beta_1= 5 + \frac{\sqrt{25 + 14 \sqrt{2}}}{\sqrt{3}}\simeq 8.864,\\
			&\beta_2= 2+\sqrt{2},\quad  \beta_3= 3+\sqrt{3}.
		\end{aligned}
\end{equation}
The visibility below which each noisy measurement sets $\Phi_v\left(\mathbf{M}_i\right)$ no longer violate the respective inequality is
\begin{equation}
		\begin{aligned}
			&v_1= \frac{\sqrt{25 + 14 \sqrt{2}}}{5\sqrt{3}}\simeq 0.7729,\\
			&v_2=\frac{1}{\sqrt{2}}\simeq 0.7071,\quad v_3=\frac{1}{\sqrt{3}}\simeq 0.5774.
		\end{aligned}
\end{equation}
For the latter two cases, this is the necessary and sufficient condition, as they coincide with the result obtain in section~\ref{subsec2} by the explicit construction of a classical model. For the former case, the result is not far from optimal, as indicated by the numerical lower bound appearing in Table~\ref{TabEndMatter}: $v \leq 0.7605$.

\section{Operational interpretation in non-disturbance tasks} 
Our motivation for introducing and analysing classical measurement models has so far rested on the conceptual interest in understanding the role of superposition in quantum measurements. We now ask whether it also has relevant operational meaning in quantum information tasks. In particular, do there exist relevant tasks where classical measurement models clearly distinguish themselves from JM? To this end, we revisit the task of implementing quantum measurements in a sequence, in such a way that the first measurement does not disturb the outcome of the second measurement. Such problems are well-studied in several variations \cite{Guhne2023}. Here, we consider scenarios where experimenters are allowed to share classical randomness and we show that the existence of a classical measurement model provides a guarantee for a non-disturbing implementation, while the same cannot be associated with JM.

\subsection{Non-disturbance task}
We consider the problem of whether quantum measurements can be implemented in sequence without causing a disturbance. Specifically, consider that we are given two measurements, that we for simplicity label as $A_a=M_{a|1}$ and $B_{b}=M_{b|2}$. Our goal is to realise first $\{A_a\}_a$ and then $\{B_b\}_b$ on an arbitrary state $\rho$ in such a way that the former does not disturb the outcome statistics of the latter. This scenario is illustrated in Fig~\ref{fig_sequence}. A source of classical randomness samples $\lambda$ from some distribution $q(\lambda)$ and sends it to the two measurement devices. The two devices are then programmed to implement the POVM $\{A_{a|\lambda}\}_a$ and $\{B_{b|\lambda}\}_b$ respectively. These may be arbitrary, but their effective action must correspond to the target measurements,  namely 
\begin{equation}\label{Average_POVMs}
	\begin{aligned}
		&A_{a} = \sum_{\lambda} q(\lambda) A_{a|\lambda}, \quad \text{and} \quad B_{b} = \sum_{\lambda} q(\lambda) B_{b|\lambda}.
	\end{aligned}
\end{equation}
For a given $\lambda$, the state arriving at the second device depends not only on $\rho$ and $\{A_{a|\lambda}\}_a$ but also on the instrument that implements the latter. Denote that instrument by $\{I_{a|\lambda}\}_a$. Formally, $I_{a|\lambda}$ is a completely positive map, $\sum_a I_{a|\lambda}$ is trace-preserving and the dual map satisfies $I^\dagger_{a|\lambda}(\mathds{1})=A_{a|\lambda}$. In order to be non-disturbing, we must have that the outcome statistics of the second measurement on the state updated by the first measurement is identical to the outcome statistics if the first measurement had not taken place. That is, $\sum_{a,\lambda}q(\lambda) \tr\left(I_{a|\lambda}(\rho) B_{b|\lambda}\right) =\tr\left(\rho B_b\right)$ $\forall \rho$. This can equivalently be written as
\begin{equation}\label{NDcond}
	\sum_{a,\lambda} q(\lambda) I_{a|\lambda}^\dagger(B_{b|\lambda})=B_b.
\end{equation}
If there exists instruments $\{I_{a|\lambda}\}$ and POVMs $\{B_{b|\lambda}\}$ such that the conditions \eqref{Average_POVMs} and \eqref{NDcond} are satisfied, we say that the POVMs $\{A_a\}_a$ and $\{B_b\}_b$ admit a non-disturbing realisation.

\begin{figure}
	\includegraphics[width=0.9\columnwidth]{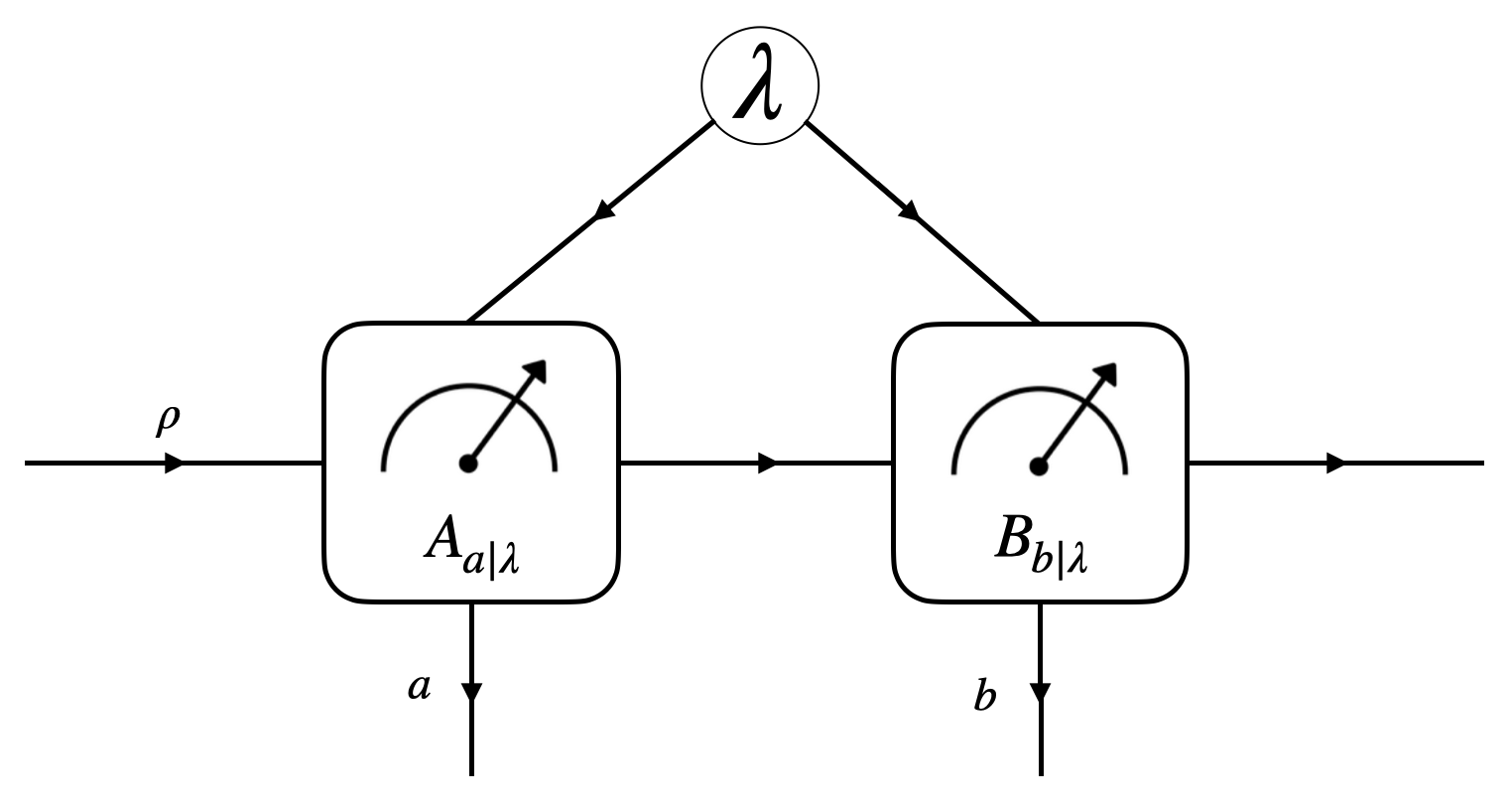}
	\caption{\textbf{Sequential measurements.} A state $\rho$ is measured twice in sequence by devices that share a classical common cause $\lambda$. They statistically implement specific POVMs $\{A_a\}$ and $\{B_b\}$. The measurements are non-disturbing if there exists a way to implement $\{A_a\}$ without altering the outcome distribution of $\{B_b\}$.} \label{fig_sequence}
\end{figure}

\subsection{Measurement classicality implies non-disturbance}
	
Let us now consider the relationship between non-disturbance, classical models and JM. Firstly, non-disturbance implies JM. Indeed, consider that we employ $G_{ab} = \sum_{\lambda} q({\lambda}) \mathcal{I}_{a|\lambda}^\dagger(B_{b|\lambda})$ as the parent POVM. This is a valid POVM because the completeness of $\{G_{ab}\}_{ab}$ follows from  $\sum_b B_{b|\lambda}=\mathds{1}$ and that the dual of a CPTP map is unital. We can now check the marginals:  $\sum_b G_{ab}=A_a$ due to Eq~\eqref{Average_POVMs} and $\sum_a G_{ab}=B_b$ due to Eq~\eqref{NDcond}. 
	
Secondly, JM does not imply non-disturbance. To show that, we use that if $\{A_a\}_a$ or $\{B_b\}_b$ is extremal our problem reduces to that studied in \cite{Heinosaari2010}, where no classical randomness is allowed. It was shown in \cite{Heinosaari2010} that, for qubits, commutation is necessary and sufficient for such non-disturbance.  For example, if  $\{A_a\}_a$ and $\{B_b\}_b$ are both the same extremal non-projective qubit measurement then they are trivially jointly measurable, but the former disturbs the latter if the measurement operators do not commute.  Thus, JM is too powerful to guarantee non-disturbance.

Thirdly, in contrast to JM, the classicality of the measurement pair $\{A_a\}_a$ and $\{B_b\}_b$ implies non-disturbance. 
\begin{result}[Classicality implies non-disturbance]
If  $\{A_a\}_a$ and $\{B_b\}_b$ are a pair of measurements with a classical model, then there exists a non-disturbing sequential implementation. 
\end{result}
\begin{proof}
Following Eq~\eqref{Clmodel}, we  define a classical measurement model $A_a=\sum_{\lambda}q(\lambda)A_{a|\lambda}$ and $B_b=\sum_{\lambda}q(\lambda)B_{b|\lambda}$, where $A_{a|\lambda}= \sum_{k=1}^{d} p_A(a|k,\lambda) E_{k|\lambda}$ and $B_{b|\lambda}= \sum_{k=1}^{d} p_B(b|k,\lambda) E_{k|\lambda}$. To prove that $\lbrace A_a, B_b \rbrace_{a,b}$ are non-disturbing, we need to verify that there exists a collection of instruments $\{\mathcal{I}_{a|\lambda}\}_\lambda$ implementing $A_{a|\lambda}$ such that Eq~\eqref{NDcond} is satisfied. To this end, we consider the L\"uders instrument, $\mathcal{I}_{a|\lambda}(\rho) = \sqrt{A_{a|\lambda}}\rho\sqrt{A_{a|\lambda}}$ for any state $\rho$. Note that in this case, $ \mathcal{I}_{a|\lambda}^{\dagger}(M) = \sqrt{A_{a|\lambda}}M \sqrt{A_{a|\lambda}}$ for any measurement $M$.
Starting at the left-hand-side of Eq~\eqref{NDcond}, we get
\begin{widetext}
\begin{equation}
	\begin{aligned}
			\sum_{a,\lambda} q(\lambda) I_{a|\lambda}^\dagger(B_{b|\lambda})  &=\sum_{\lambda} q(\lambda) \, \sum_{a} \sqrt{A_{a|\lambda}} B_{b|\lambda} \sqrt{A_{a|\lambda}}\\
		& = \sum_{\lambda} q(\lambda) \, \sum_{a} \sum_{k,k',k^{''}=1}^{d} \sqrt{p_A(a|k^{'},\lambda)}p_B(b|k,\lambda) \sqrt{p_A(a|k^{''},\lambda)} \, E_{k^{'}|\lambda}E_{k|\lambda}E_{k^{''}|\lambda}\\
		&=\sum_{\lambda} q(\lambda) \, \sum_{a} \sum_{k=1}^{d}p_A(a|k,\lambda)p_B(b|k,\lambda)E_{k|\lambda} = \sum_{\lambda} q(\lambda) \, \sum_{k=1}^{d}p_B(b|k,\lambda)E_{k|\lambda}=B_b,
	\end{aligned}
\end{equation}
\end{widetext}
where in the second line we have used that $\{E_{k|\lambda}\}_k$ is orthogonal and rank-one.
\end{proof}

In view of this result, we conclude  the following chain of implications to hold:
\begin{equation}
	\text{Classical model} \quad \Rightarrow \quad\text{Non-disturbance} \quad \Rightarrow \quad\text{JM}.
\end{equation}
For unbiased dichotomic qubit measurements, this hierarchy collapses and all three concepts become equivalent. In general, however, the implications are strict. The fact that non-disturbance does not imply classicality is expected, because the former is not a symmetric concept. In other words, swapping the measurement order could change whether $\{A_a\}_a$ and $\{B_b\}_b$ admit a non-disturbing implementation.  In Supplementary Material we explicitly show that non-disturbance does not imply classicality for POVMs in dimension $d \geq 5$.

\section{Discussion}	
Quantum measurement devices routinely rely on controlling superposition properties.  Here, we have proposed conceptually classical models for simulating quantum measurements, centered on the core idea that no device employed in the simulation model can have any superposition features. We determined the exact noise and loss thresholds at which all projective $d$-dimensional measurements admit a classical model. Interestingly, both these thresholds coincide with those at which the set of all von Neumann measurements is jointly measurable \cite{Uola2014, Sekatski2024}. This means that even though joint measurability is a strictly larger set, the two notions become equivalent in the limit in which all projective measurements allowed in the theory are considered. It remains an interesting open problem to identify this threshold when also including non-projective measurements. Since classical models correspond to JM with parent POVMs which are projectively simulable, lower bounds on this threshold can be found in \cite{Barrett2002, Almeida2007}, although they are not expected to be optimal. In addition, we have seen that classical measurements are operationally distinct from joint measurability, as the former was sufficient for succeeding with sequential non-disturbance tasks, whereas the latter was only necessary. A deeper understanding of the relation between these two concepts may be unveiled by studying whether the gap between them can grow, perhaps in an unbounded way, by considering non-disturbance tasks with more than two sequential measurements.

Recently, in Ref~\cite{Cobucci2025} some of us proposed classical models for simulating quantum states, and the present work can be seen as a continuation of this research line to  encompass also quantum measurements. We now discuss the relationship between the two problems; see Supplementary Material for extended details. Firstly, classical measurement models include a non-trivial post-processing which classical state models do not. Although Result~\ref{res1} showed that the post-processing can be eliminated, it came at the price of allowing arbitrary ranks in the measurement devices. This has no counterpart for classical states. Secondly, the case of  states can be seen as a special case of measurements, but importantly not vice versa. The former follows from considering sets of dichotomic measurements on form $\mathbf{M}=\{\rho_x,\mathds{1}-\rho_x\}_x$, where $\rho_x$ is a state. Upon examination, the decomposition in Eq~\eqref{Clmodel2} can be reduced to that defined in Ref~\cite{Cobucci2025} for classical states. Thirdly, while the argument leading up to the part of Result~\ref{res2} concerning isotropic noise admits a clear connection to the analysis of classical state models for all pure states in quantum theory (see Ref~\cite{Cobucci2025}), no such correspondence appears relevant for the part of Result~\ref{res2} that concerns particle losses. This is because measurements represent a more general situation, where also additional forms of decoherence become relevant. Lastly, for classical state models Ref~\cite{Cobucci2025} reports connections both to quantum steering and to criteria for determining that states cannot give rise to quantum correlations in prepare-and-measure scenarios. While these connections are non-trivial for states, their counterparts for measurements are much less insightful: classical measurement models imply JM, and JM is known to imply both the absence of steering and of quantum correlations in prepare-and-measure scenarios \cite{Frenkel_2015}. Instead, we have seen that more interesting operational interpretations for classical measurement models emerge in very different quantum information settings, namely  non-disturbance tasks. Our work leaves the natural open problem of developing classical models for the most general quantum operations, namely quantum channels.

\begin{acknowledgements}
	We thank Paul Skrzypczyk for useful discussions and Martin J. Renner for useful comments. This work is supported by the by the Knut and Alice Wallenberg Foundation through the Wallenberg Center for Quantum Technology (WACQT), the Wallenberg Initiative on Networks and Quantum Information (WINQ), the Swedish Research Council under Contract No.~2023-03498 and 2024-05341, the Crafoord Foundation, the Krapperup Foundation and the Swedish Foundation for Strategic Research. G.C. acknowledges Nordita visiting PhD fellowship.		
\end{acknowledgements}
	
	\bibliography{references_clMeasurement}

@article{Guhne2023,
	title = {Colloquium: Incompatible measurements in quantum information science},
	author = {G\"uhne, Otfried and Haapasalo, Erkka and Kraft, Tristan and Pellonp\"a\"a, Juha-Pekka and Uola, Roope},
	journal = {Rev. Mod. Phys.},
	volume = {95},
	issue = {1},
	pages = {011003},
	numpages = {25},
	year = {2023},
	month = {Feb},
	publisher = {American Physical Society},
	doi = {10.1103/RevModPhys.95.011003},
	url = {https://link.aps.org/doi/10.1103/RevModPhys.95.011003}
}

@article{Wolf2009,
	title = {Measurements Incompatible in Quantum Theory Cannot Be Measured Jointly in Any Other No-Signaling Theory},
	author = {Wolf, Michael M. and Perez-Garcia, David and Fernandez, Carlos},
	journal = {Phys. Rev. Lett.},
	volume = {103},
	issue = {23},
	pages = {230402},
	numpages = {4},
	year = {2009},
	month = {Dec},
	publisher = {American Physical Society},
	doi = {10.1103/PhysRevLett.103.230402},
	url = {https://link.aps.org/doi/10.1103/PhysRevLett.103.230402}
}

@article{Quintino2014,
	title = {Joint Measurability, Einstein-Podolsky-Rosen Steering, and Bell Nonlocality},
	author = {Quintino, Marco T\'ulio and V\'ertesi, Tam\'as and Brunner, Nicolas},
	journal = {Phys. Rev. Lett.},
	volume = {113},
	issue = {16},
	pages = {160402},
	numpages = {5},
	year = {2014},
	month = {Oct},
	publisher = {American Physical Society},
	doi = {10.1103/PhysRevLett.113.160402},
	url = {https://link.aps.org/doi/10.1103/PhysRevLett.113.160402}
}

@article{Oszmaniec2017,
	title = {Simulating Positive-Operator-Valued Measures with Projective Measurements},
	author = {Oszmaniec, Micha\l{} and Guerini, Leonardo and Wittek, Peter and Ac\'{\i}n, Antonio},
	journal = {Phys. Rev. Lett.},
	volume = {119},
	issue = {19},
	pages = {190501},
	numpages = {6},
	year = {2017},
	month = {Nov},
	publisher = {American Physical Society},
	doi = {10.1103/PhysRevLett.119.190501},
	url = {https://link.aps.org/doi/10.1103/PhysRevLett.119.190501}
}

@misc{Brask2026,
	title={Quantum correlations in prepare-and-measure scenarios and their semi-device-independent applications}, 
	author={Jonatan Bohr Brask and Nicolas Brunner and Jef Pauwels and Davide Rusca and Armin Tavakoli},
	year={2026},
	eprint={2603.23604},
	archivePrefix={arXiv},
	primaryClass={quant-ph},
	url={https://arxiv.org/abs/2603.23604}, 
}

@article{Uola2015,
	title = {One-to-One Mapping between Steering and Joint Measurability Problems},
	author = {Uola, Roope and Budroni, Costantino and G\"uhne, Otfried and Pellonp\"a\"a, Juha-Pekka},
	journal = {Phys. Rev. Lett.},
	volume = {115},
	issue = {23},
	pages = {230402},
	numpages = {5},
	year = {2015},
	month = {Dec},
	publisher = {American Physical Society},
	doi = {10.1103/PhysRevLett.115.230402},
	url = {https://link.aps.org/doi/10.1103/PhysRevLett.115.230402}
}

@misc{Brinster2025,
	title={Robust certification of non-projective measurements: theory and experiment}, 
	author={Raphael Brinster and Peter Tirler and Shishir Khandelwal and Michael Meth and Hermann Kampermann and Dagmar Bruß and Rainer Blatt and Martin Ringbauer and Armin Tavakoli and Nikolai Wyderka},
	year={2025},
	eprint={2511.04446},
	archivePrefix={arXiv},
	primaryClass={quant-ph},
	url={https://arxiv.org/abs/2511.04446}, 
}

@article{Frenkel_2015,
	title={Classical Information Storage in an n-Level Quantum System},
	volume={340},
	ISSN={1432-0916},
	url={http://dx.doi.org/10.1007/s00220-015-2463-0},
	DOI={10.1007/s00220-015-2463-0},
	number={2},
	journal={Communications in Mathematical Physics},
	publisher={Springer Science and Business Media LLC},
	author={Frenkel, Péter E. and Weiner, Mihály},
	year={2015},
	month=Sept, pages={563–574} }

@Article{Cobucci2025,
	author={Cobucci, Gabriele
	and Bernal, Alexander
	and Renner, Martin J.
	and Tavakoli, Armin},
	title={Operationally classical simulation of quantum states},
	journal={Nature Communications},
	year={2026},
	month={Jan},
	day={27},
	volume={17},
	number={1},
	pages={1104},
	issn={2041-1723},
	doi={10.1038/s41467-026-68581-3},
	url={https://doi.org/10.1038/s41467-026-68581-3}
}

@article{Uola_2020,
   title={Quantum steering},
   volume={92},
   ISSN={1539-0756},
   url={http://dx.doi.org/10.1103/RevModPhys.92.015001},
   DOI={10.1103/revmodphys.92.015001},
   number={1},
   journal={Reviews of Modern Physics},
   publisher={American Physical Society (APS)},
   author={Uola, Roope and Costa, Ana C. S. and Nguyen, H. Chau and Gühne, Otfried},
   year={2020},
   month=mar }

@article{Durt2010,
author = {Durt, Thomas and Englert, Berthold-Georg and Bengtsson, Ingemar and \.{Z}yczkowski, Karol},
title = {On mutually unbiased bases},
journal = {International Journal of Quantum Information},
volume = {08},
number = {04},
pages = {535-640},
year = {2010},
doi = {10.1142/S0219749910006502},

URL = { 
    
        https://doi.org/10.1142/S0219749910006502
    
    

},
eprint = { 
    
        https://doi.org/10.1142/S0219749910006502
    
    

}
}

@article{Renes2004,
   title={Symmetric informationally complete quantum measurements},
   volume={45},
   ISSN={1089-7658},
   url={http://dx.doi.org/10.1063/1.1737053},
   DOI={10.1063/1.1737053},
   number={6},
   journal={Journal of Mathematical Physics},
   publisher={AIP Publishing},
   author={Renes, Joseph M. and Blume-Kohout, Robin and Scott, A. J. and Caves, Carlton M.},
   year={2004},
   month=jun, pages={2171–2180} }

@article{Tavakoli2024,
   title={Semidefinite programming relaxations for quantum correlations},
   volume={96},
   ISSN={1539-0756},
   url={http://dx.doi.org/10.1103/RevModPhys.96.045006},
   DOI={10.1103/revmodphys.96.045006},
   number={4},
   journal={Reviews of Modern Physics},
   publisher={American Physical Society (APS)},
   author={Tavakoli, Armin and Pozas-Kerstjens, Alejandro and Brown, Peter and Araújo, Mateus},
   year={2024},
   month=dec }

@article{Heinosaari2010,
   title={Nondisturbing quantum measurements},
   volume={51},
   ISSN={1089-7658},
   url={http://dx.doi.org/10.1063/1.3480658},
   DOI={10.1063/1.3480658},
   number={9},
   journal={Journal of Mathematical Physics},
   publisher={AIP Publishing},
   author={Heinosaari, Teiko and Wolf, Michael M.},
   year={2010},
   month=sep }

@article{Sekatski2024,
   title={Compatibility of projective measurements subject to white noise and loss},
   volume={109},
   ISSN={2469-9934},
   url={http://dx.doi.org/10.1103/PhysRevA.109.022215},
   DOI={10.1103/physreva.109.022215},
   number={2},
   journal={Physical Review A},
   publisher={American Physical Society (APS)},
   author={Sekatski, Pavel},
   year={2024},
   month=feb }

@article{Peres1990,
	author = {Asher Peres},
	doi = {10.1007/bf01883517},
	journal = {Foundations of Physics},
	number = {12},
	pages = {1441--1453},
	title = {Neumark's Theorem and Quantum Inseparability},
	volume = {20},
	year = {1990}
}

@article{Streltsov2017,
 	title = {Colloquium: Quantum coherence as a resource},
 	author = {Streltsov, Alexander and Adesso, Gerardo and Plenio, Martin B.},
 	journal = {Rev. Mod. Phys.},
 	volume = {89},
 	issue = {4},
 	pages = {041003},
 	numpages = {34},
 	year = {2017},
 	month = {Oct},
 	publisher = {American Physical Society},
 	doi = {10.1103/RevModPhys.89.041003},
 	url = {https://link.aps.org/doi/10.1103/RevModPhys.89.041003}
 }

@misc{cobucci2025MaxNP,
      title={Maximally non-projective measurements are not always symmetric informationally complete}, 
      author={Gabriele Cobucci and Raphael Brinster and Shishir Khandelwal and Hermann Kampermann and Dagmar Bruß and Nikolai Wyderka and Armin Tavakoli},
      year={2025},
      eprint={2508.03652},
      archivePrefix={arXiv},
      primaryClass={quant-ph},
      url={https://arxiv.org/abs/2508.03652}, 
}

@article{Khandelwal2025,
   title={Simulating Quantum Instruments with Projective Measurements and Quantum Postprocessing},
   volume={135},
   ISSN={1079-7114},
   url={http://dx.doi.org/10.1103/bhr5-g71p},
   DOI={10.1103/bhr5-g71p},
   number={4},
   journal={Physical Review Letters},
   publisher={American Physical Society (APS)},
   author={Khandelwal, Shishir and Tavakoli, Armin},
   year={2025},
   month=jul }

@misc{github-code_simu-witness,
	author = {},
 	title = {Code for classical simulation of sets of quantum measurements and witness},
	note = {\url{https://github.com/GabrieleCobucci/classical_measurement_sets}},
	year = 2026,
}

@misc{five_tetrahedra,
 	title = {Compound of five tetrahedra},
	note = {\url{https://polytope.miraheze.org/wiki/Compound_of_five_tetrahedra}},
	year = 2025,
}

@article{Uola2014,
   title={Joint Measurability of Generalized Measurements Implies Classicality},
   volume={113},
   ISSN={1079-7114},
   url={http://dx.doi.org/10.1103/PhysRevLett.113.160403},
   DOI={10.1103/physrevlett.113.160403},
   number={16},
   journal={Physical Review Letters},
   publisher={American Physical Society (APS)},
   author={Uola, Roope and Moroder, Tobias and Gühne, Otfried},
   year={2014},
   month=oct }

@article{Jones07,
  title = {Entanglement, Einstein-Podolsky-Rosen correlations, Bell nonlocality, and steering},
  author = {Jones, S. J. and Wiseman, H. M. and Doherty, A. C.},
  journal = {Phys. Rev. A},
  volume = {76},
  issue = {5},
  pages = {052116},
  numpages = {18},
  year = {2007},
  month = {Nov},
  publisher = {American Physical Society},
  doi = {10.1103/PhysRevA.76.052116},
  url = {https://link.aps.org/doi/10.1103/PhysRevA.76.052116}
}

@article{Bernal2024,
   title={Absolute Dimensionality of Quantum Ensembles},
   volume={133},
   ISSN={1079-7114},
   url={http://dx.doi.org/10.1103/PhysRevLett.133.240203},
   DOI={10.1103/physrevlett.133.240203},
   number={24},
   journal={Physical Review Letters},
   publisher={American Physical Society (APS)},
   author={Bernal, Alexander and Cobucci, Gabriele and Renner, Martin J. and Tavakoli, Armin},
   year={2024},
   month=dec }

@article{Werner89,
  title = {Quantum states with Einstein-Podolsky-Rosen correlations admitting a hidden-variable model},
  author = {Werner, Reinhard F.},
  journal = {Phys. Rev. A},
  volume = {40},
  issue = {8},
  pages = {4277--4281},
  numpages = {0},
  year = {1989},
  month = {Oct},
  publisher = {American Physical Society},
  doi = {10.1103/PhysRevA.40.4277},
  url = {https://link.aps.org/doi/10.1103/PhysRevA.40.4277}
}

@article{TaylorMarkov,
 ISSN = {00029890, 19300972},
 URL = {http://www.jstor.org/stable/2311462},
 author = {A. S. Davis},
 journal = {The American Mathematical Monthly},
 number = {3},
 pages = {264--267},
 publisher = {[Taylor & Francis, Ltd., Mathematical Association of America]},
 title = {Markov Chains as Random Input Automata},
 urldate = {2026-02-19},
 volume = {68},
 year = {1961}
}

@article{Renner2025,
  title = {Compatibility of Generalized Noisy Qubit Measurements},
  author = {Renner, Martin J.},
  journal = {Phys. Rev. Lett.},
  volume = {132},
  issue = {25},
  pages = {250202},
  numpages = {8},
  year = {2024},
  month = {Jun},
  publisher = {American Physical Society},
  doi = {10.1103/PhysRevLett.132.250202},
  url = {https://link.aps.org/doi/10.1103/PhysRevLett.132.250202}
}

@article{Yujie2024,
  title = {Exact Steering Bound for Two-Qubit Werner States},
  author = {Zhang, Yujie and Chitambar, Eric},
  journal = {Phys. Rev. Lett.},
  volume = {132},
  issue = {25},
  pages = {250201},
  numpages = {7},
  year = {2024},
  month = {Jun},
  publisher = {American Physical Society},
  doi = {10.1103/PhysRevLett.132.250201},
  url = {https://link.aps.org/doi/10.1103/PhysRevLett.132.250201}
}

@article{Barrett2002,
   title={Nonsequential positive-operator-valued measurements on entangled mixed states do not always violate a Bell inequality},
   volume={65},
   ISSN={1094-1622},
   url={http://dx.doi.org/10.1103/PhysRevA.65.042302},
   DOI={10.1103/physreva.65.042302},
   number={4},
   journal={Physical Review A},
   publisher={American Physical Society (APS)},
   author={Barrett, Jonathan},
   year={2002},
   month=mar }

@article{Almeida2007,
   title={Noise Robustness of the Nonlocality of Entangled Quantum States},
   volume={99},
   ISSN={1079-7114},
   url={http://dx.doi.org/10.1103/PhysRevLett.99.040403},
   DOI={10.1103/physrevlett.99.040403},
   number={4},
   journal={Physical Review Letters},
   publisher={American Physical Society (APS)},
   author={Almeida, Mafalda L. and Pironio, Stefano and Barrett, Jonathan and Tóth, Géza and Acín, Antonio},
   year={2007},
   month=jul }
	
\appendix
	\onecolumngrid

\section{Simulation protocol without post-processing}\label{Post_processing_omission}

In this appendix, we show that the post-processing rule in the definition of the classical model for measurement can be omitted without loss of generality if we let each device $\mathcal{M}_{\lambda}$ perform higher-rank measurements, i.e.
\begin{equation}\label{App_equivalence_meas_simulation_definitions}
	M_{a|x}=\int d\lambda \, q(\lambda) \, \sum_{k=1}^d p(a|x,k,\lambda) E_{k|\lambda}\iff M_{a|x} = \int d\tilde{\lambda} \, q(\tilde{\lambda}) F_{a|x,\tilde{\lambda}}, \quad \forall a = 1,\dots,o, \text{ and } x = 1,\dots,n,
\end{equation}
where $\lbrace F_{a|x,\tilde{\lambda}} \rbrace_{a,x}$ for each $\tilde{\lambda}$ are higher-rank projectors such that $F_{a|x,\tilde{\lambda}}F_{a'|x,\tilde{\lambda}} = F_{a|x,\tilde{\lambda}} \delta_{a,a'}$, $[F_{a|x,\tilde{\lambda}},F_{a'|x',\tilde{\lambda}}] = 0$ for $a\neq a'$ and $x\neq x'$ and $\sum_{a} F_{a|x,\tilde{\lambda}} = \openone$. Notice that the sets $\lbrace \lambda \rbrace_{\lambda}$ and $\lbrace \tilde{\lambda}\rbrace_{\tilde{\lambda}}$ are not necessarily the same.

We start proving that the simulation protocol on the right hand side of \eqref{App_equivalence_meas_simulation_definitions} is necessary for a classical simulation. First, consider the post-processing rule $p(a|x,k,\lambda)$ and relabel the tuple $(x,k,\lambda) \rightarrow z$, where $z = 1,\dots,nd|\lambda|$. Now we show that it is possible to decompose this probability distribution as a convex combination of deterministic strategies $D_{\gamma}(a|z) = \delta_{a,\gamma(z)}$,
\begin{equation}\label{convex_strategies}
	p(a|z) = \sum_{\gamma} q(\gamma) D_{\gamma}(a|z).
\end{equation}
Define a matrix $P$ whose elements are $\lbrace P_{az} = p(a|z)\rbrace_{a,z}$. This matrix is stochastic, i.e. it is positive and the sum of all the elements for each column is $1$. Therefore, \eqref{convex_strategies} follows directly from Theorem 1 in \cite{TaylorMarkov}.

Then, relabeling again $z \rightarrow (x,k,\lambda)$, we get

\begin{equation}\label{App_meas_sim_prot_det_strat}
	M_{a|x} = \int d\lambda \, q(\lambda) \, \sum_{k=1}^d \sum_{\gamma} q(\gamma)\delta_{a,\gamma(x,k,\lambda)} E_{k|\lambda} = \int d\lambda \sum_{\gamma} \, q(\lambda)q(\gamma) \sum_{k=1}^d \delta_{a,\gamma(x,k,\lambda)} E_{k|\lambda},
\end{equation}
where we can write $q(\lambda)q(\gamma) = q(\gamma,\lambda)$ and define a new index $\tilde{\lambda} \coloneqq (\gamma,\lambda)$. Call $F_{a|x,\tilde{\lambda}} \equiv \sum_{k=1}^d \delta_{a,\gamma(x,k,\lambda)} E_{k|\lambda}$. These are higher-rank projective measurements, since
\begin{equation}
	\sum_{a} F_{a|x,\tilde{\lambda}} =\sum_{a}  \sum_{k=1}^d \delta_{a,\gamma(x,k,\lambda)} E_{k|\lambda} = \sum_{k=1}^d \underbrace{\sum_{a} \delta_{a,\gamma(x,k,\lambda)}}_{=1} E_{k|\lambda} = \openone,
\end{equation}
where we used that $\lbrace E_{k|\lambda} \rbrace_{k}$ is a proper measurement for each $\lambda$. Moreover,
\begin{equation}
	F_{a|x,\tilde{\lambda}} F_{a'|x,\tilde{\lambda}} = \sum_{k,k'=1}^{d} \delta_{a,\gamma(x,k,\lambda)} \delta_{a',\gamma(x,k',\lambda)} \underbrace{E_{k|\lambda}E_{k'|\lambda}}_{= E_{k|\lambda}\delta_{k,k'}} = \sum_{k=1}^{d} \delta_{a,\gamma(x,k,\lambda)} E_{k|\lambda} \delta_{a,a'} = F_{a|x,\tilde{\lambda}} \delta_{a,a'},
\end{equation}
and
\begin{equation}
	[F_{a|x,\tilde{\lambda}},F_{a'|x',\tilde{\lambda}}] = \sum_{k,k'=1}^{d} \delta_{a,\gamma(x,k,\lambda)}\delta_{a',\gamma(x',k,\lambda)} \underbrace{[E_{k|\lambda},E_{k'|\lambda}]}_{=0} = 0 ,
\end{equation}
where in both cases we used that $E_{k|\lambda}E_{k'|\lambda} = E_{k|\lambda} \delta_{k,k'}$. Therefore, starting from the simulation protocol with post-processing, we got the protocol on the right hand side of \eqref{App_meas_sim_prot_det_strat}, which is equivalent to the one on the right hand side of \eqref{App_equivalence_meas_simulation_definitions}.

We can now prove that the right hand side of \eqref{App_equivalence_meas_simulation_definitions} is also sufficient for the simulation protocol of measurements. Indeed, if there exist some higher-rank projectors $F_{a|x,\lambda}$ for which a simulation is possible, then by spectral decomposition we find
\begin{equation}\label{App_spec_dec_F}
	F_{a|x,\lambda} = \sum_{k=1}^{d} \mu(a,x,k, \lambda) E_{k|a,x,\lambda} = \sum_{k=1}^{d} \mu(a,x,k, \lambda) E_{k|\lambda}= \sum_{k=1}^{d} \delta_{a,\gamma(x,k,\lambda)} E_{k|\lambda}
\end{equation}
where in the second equality we have used that $[F_{a|x,\lambda},F_{a'|x',\lambda}] = 0$, hence the eigenbasis of each $F_{a|x,\lambda}$ does not depend on $(a,x)$, i.e. $E_{k|a,x,\lambda} = E_{k|\lambda}$. For the last equality,  notice that since $F_{a|x,\lambda}$ is a projector, its eigenvalues can only be $\lbrace 0,1 \rbrace$, i.e. $\mu(a,x,k, \lambda)\in\{0,1\}$. In addition, using that  $\sum_{a} F_{a|x,\lambda}=\openone$, we get that $\sum_{a} \mu(a,x,k, \lambda)=1$. Therefore, $\mu(a,x,k,\lambda)$ is a proper probability distribution over $a$, i.e. $\mu(a,x,k,\lambda) = p(a|x,k,\lambda)$. In particular, it is always possible to rewrite $\mu(a,x,k, \lambda)=\delta_{a,\gamma(x,k,\lambda)}$ for some function $\gamma$ which maps $(x,k,\lambda)$ to values of $a$. Inserting \eqref{App_spec_dec_F} into the right hand side of \eqref{App_equivalence_meas_simulation_definitions}, we find a simulation protocol with post-processing as defined in the main text.

\section{Exact noise and loss rates for simulating all projective measurements }\label{Exact_noise_loss_proj}
In this appendix, we provide a classical model for simulating all projective measurements.  Let $\mathbf{M}_{v}$ be the set of all rank-1 measurements of a $d$-dimensional Hilbert space subject to the white noise $v\in [0,1]$. We can parametrize it considering the $d$-dimensional computational basis $\lbrace \ket{a} \rbrace_{a=1}^{d}$ and all possible unitaries $U \in \text{SU}(d)$,
\begin{equation}
	\mathbf{M}_{v} = \left\lbrace N_{a|U} = v\, U\ketbra{a}{a}U^{\dagger} + \frac{1-v}{d}\openone \right\rbrace_{U\in \text{SU}(d)}.
\end{equation}
Let us first consider no losses. We provide a classical simulation of $\mathbf{M}_{v}$ up to $v_{\text{sim}}=\frac{H_d-1}{d-1}$.  Consider the following classical model
\begin{equation}\label{App_eq:model_M}
	\int d\mu_{\text{Haar}}(V) \sum_{k=1}^d p^\star (a|U, k, V)\, V\ketbra{k} V^\dagger,
\end{equation}
in which we have replaced the pre-processing distribution over $\lbrace q(\lambda) \rbrace_{\lambda}$ in \eqref{App_Measurement_simulability} with the integral over the Haar measure of all possible unitaries $V$. Furthermore, we defined a specific post-processing rule $p^\star(a|U, k, V)$ as
\begin{equation}\label{probability_simulation_specific}
	p^\star(a|U, k, V)=\left\{
	\begin{aligned}
		1&\quad \text{if }\ \abs{\bra{k}V^\dagger U \ket{a}}^2>\abs{\bra{k}V^\dagger U \ket{a'}}^2\quad \forall a' \neq a,\\
		0&\quad \text{otherwise.}
	\end{aligned}\right.
\end{equation}
The integral \eqref{App_eq:model_M} is invariant under unitary transformations $W$ such that $W U\ket{a}=U\ket{a}$. Indeed,
\begin{equation}
	\int d\mu_{\text{Haar}}(W^\dagger V) \sum_{k=1}^d p^\star (a|U, k, W^\dagger V)\, W^\dagger V\ketbra{k}{k} V^\dagger W= \int d\mu_{\text{Haar}}(V) \sum_{k=1}^d p^\star (a|U, k, V)\, V\ketbra{k}{k} V^\dagger,
\end{equation}
where we have used the invariance of the Haar measure $d\mu_{\text{Haar}}(V) = d\mu_{\text{Haar}}(W^\dagger V)$ and the definition \eqref{probability_simulation_specific} of the post-processing rule, i.e. $p^{\star}(a|U,k,WV) = |\bracket{k}{(W^\dagger V)^{\dagger}U}{a}|^2 = |\bracket{k}{V^{\dagger}W U}{a}|^2 = |\bracket{k}{V^{\dagger}U}{a}|^2 = p^{\star}(a|U,k,V)$. Therefore, using Appendix C in \cite{Bernal2024}, the integral must yield a measurement of the form
\begin{equation}
	v\, U\ketbra*{a}U^\dagger+\frac{1-v}{d} \openone=\int d\mu_{\text{Haar}}(V) \sum_{k=1}^d p^\star (a|U, k, V)\, V\ketbra{k} V^\dagger,
\end{equation}
for some visibility $v \in [0,1]$. In order to compute this visibility, we bracket both sides by $\bra{a}U^\dagger \cdot U \ket{a}$:
\begin{equation}
	\frac{(d-1)v+1}{d}=\sum_{k=1}^d \int d\mu_{\text{Haar}}(V) p^\star (a|U, k, V)\, \abs{\bra{k}V^\dagger U \ket{a}}^2.
\end{equation}
Using again the invariance of the Haar measure, we can replace $(U,V)$ with a single unitary dependence $V$. Therefore,
\begin{equation}
	\frac{(d-1)v+1}{d}=\sum_{k=1}^d \int d\mu_{\text{Haar}}(V) p^\star (a|k,V)\, \abs{\mel{k}{V}{a}}^2.
\end{equation}
For each value of $k$, the integral to compute is of the form:
\begin{equation}
	\int d\mu_{\text{Haar}}(V) p^\star (a|k,V)\, \abs{\mel{k}{V}{a}}^2=\int_a d\mu_{\text{Haar}}(V)\, \abs{\mel{k}{V}{a}}^2,
\end{equation}
where the subscript $a$ indicates that the integration is only over unitaries $V$ such that $\abs{\mel{k}{V}{a}}^2$ is greater than $\abs{\mel{k}{V}{a'}}^2$ for any other $a'\neq a$. This kind of integral has already been studied in \cite{Werner89,Jones07,Cobucci2025} and its value, independently on $k$,  reads
\begin{equation}
	\int_a d\mu_{\text{Haar}}(V)\, \abs{\mel{k}{V}{a}}^2=\frac{1}{d}\frac{H_d}{d}.
\end{equation}
Thus, 
\begin{equation}
	\sum_{k=1}^d \int d\mu_{\text{Haar}}(V) p^\star (a|k,V)\, \abs{\mel{k}{V}{a}}^2=\sum_{k=1}^d \frac{1}{d}\frac{H_d}{d}=\frac{H_d}{d}\quad \implies \quad v_{\text{sim}}=\frac{H_d-1}{d-1}.
\end{equation}
Therefore,  for any $v\leq v_{\text{sim}}$ the set $\mathbf{M}_v$ is classically simulable.

It is possible to modify the previous simulation to also account for a loss rate $\eta$. Notice that we parametrize any measurement $\{M_a\}_a$ experiencing a non-zero loss via the loss channel $L_\eta(M_a)$. Here, the loss channel implements the original measurement with a probability $\eta$, while assigning the outcome $M_{\varnothing}=(1-\eta)\openone$ otherwise:
\begin{equation}\label{measurements_loss}
	M_{a|x}^{(v,\eta)}=\left\{
	\begin{aligned}
		\eta M_a|x&\quad \text{if }\ a=1,\dots,d\\
		(1-\eta)\openone&\quad a=\varnothing.
	\end{aligned}\right.
\end{equation}
The corresponding response function $p^\star(a|U, k, V)$ to use in the simulation now reads:
\begin{equation}\label{probability_simulation_specific_loss}
	p^\star(a|U, k, V)=\left\{
	\begin{aligned}
		1&\quad \text{if }\ a=a^\star(U,k,V)\\
		0&\quad \text{otherwise,}
	\end{aligned}\right.
\end{equation}
with 
\begin{equation}\label{response_function}
	a^\star(U,k,V)=\left\{
	\begin{aligned}
		&\text{argmax} |\mel{k}{V^\dagger U}{a}|^2\quad& \text{if }\ \max_{a}|\mel{k}{V^\dagger U}{a}|^2\geq t\\
		&\varnothing\quad& \text{otherwise.}
	\end{aligned}\right.
\end{equation}
This coincides with the response function used in \cite{Sekatski2024} and hence leads to the same values of both noise and loss rates \cite{Sekatski2024}:
\begin{equation}
	(v(t),\eta(t))=\left(t-\frac{1}{d+1}\left(\frac{S_1(t)}{1-S_0(t)}+1\right),1-S_0(t)\right),
\end{equation}
with
\begin{equation}
	S_n(t)=\sum_{m=n}^{\min(\floor{\frac{1}{t}},d)}\frac{(-1)^m }{m^n}{d \choose m}(1-tm)^{d-1}.
\end{equation}
When there is no loss, $S_0(t)=0$. This corresponds to the case in which $t\leq 1/d$ or equivalently to $\floor{1/t}\geq d$:
\begin{equation}
	S_0(t)=\sum_{m=0}^{d}(-1)^m {d \choose m}(1-tm)^{d-1}=0.
\end{equation}
The expression for $S_1(t)$ in that case reads
\begin{equation}
	S_1(t)=\sum_{m=1}^{d}\frac{(-1)^m }{m} {d \choose m}(1-tm)^{d-1}=(d-1)t-H_d,
\end{equation}
which gives the previously displayed critical visibility 
\begin{equation}
	v=\frac{H_d-1}{d-1}.
\end{equation}
Another interesting case comes when $t>1/2$, since then only the $m=0,1$ terms contribute:
\begin{equation}
	S_0(t)=1-d(1-t)^{(d-1)},\quad S_1(t)=-d(1-t)^{(d-1)}.
\end{equation}
Therefore,
\begin{equation}
	(v(t),\eta(t))=\left(t,d(1-t)^{(d-1)}\right).
\end{equation}

\section{Relation between classical simulation of measurements and classical simulation of states}\label{Relation_classical_states}

In this appendix, we prove different relations between the classical simulation model for measurements introduced in the main text and the classical simulation model for states in \cite{Cobucci2025}.

For clarity, we summarize the results as follows:
\begin{enumerate}
	\item The measurements $\mathbf{M}=\{M_{a|x}: \tr(M_{a|x}) = 1\}_{a,x}$ are classically simulable $ \quad \centernot \Longrightarrow \quad$ The states $\mathcal{E}=\{\rho_{(a, x)}\equiv M_{a|x}\}_{a,x}$ are classically simulable.
	
	\item The measurements $\mathbf{M}=\{M_{a|x}: \tr(M_{a|x}) \neq 1\}_{a,x}$ are classically simulable $ \quad \centernot \Longrightarrow \quad$ The states $\mathcal{E}=\{\rho_{(a, x)}\equiv M_{a|x}/\tr(M_{a|x})\}_{a,x}$ are classically simulable.
	
	\item The states $\mathcal{E} = \lbrace \rho_{x} \rbrace_{x=1}^{m}$ are classically simulable $\quad \implies \quad$ The measurements $\mathbf{M} = \lbrace \rho_x, \openone - \rho_x \rbrace_{x=1}^{m}$ are classically simulable.  On the other hand, $\mathbf{M} = \lbrace \rho_x, \openone - \rho_x \rbrace_{x=1}^{m}$ is classically simulable $ \quad \centernot \Longrightarrow \quad$ $\mathcal{E} = \lbrace \rho_{x} \rbrace_{x=1}^{m}$ is classically simulable.
	
	\item If $d=2$, then $\mathbf{M}=\{M_{a|x}: \tr(M_{a|x})=1\}_{a,x}$ is classically simulable $ \quad \iff \quad$  $\mathcal{E}=\{\rho_{(a, x)}\equiv M_{a|x}\}_{a,x}$ is classically simulable.
\end{enumerate}

\subsection{Proof of implication 1}
For the case in which $\tr(M_{a|x}) = 1$, consider the computational and Fourier bases $C_a\equiv\ketbra{a}$, $F_a\equiv\ketbra{f_a}$ in dimension $d=4$. Let us define two four-outcome POVMs $\{M_{a|x}\}_{a=1}^4$ for $x=1,2$ by
\begin{equation}
	\begin{aligned}
		M_{1|1}&=\frac{1}{2}(C_1+C_2),&
		M_{2|1}&=\frac{1}{2}(C_3+C_4),\\
		M_{3|1}&=\frac{1}{2}(F_1+F_2),&
		M_{4|1}&=\frac{1}{2}(F_3+F_4),
	\end{aligned}
\end{equation}
and
\begin{equation}
	\begin{aligned}
		M_{1|2}&=\frac{1}{2}(C_1+C_3),&
		M_{2|2}&=\frac{1}{2}(C_2+C_4),\\
		M_{3|2}&=\frac{1}{2}(F_1+F_3),&
		M_{4|2}&=\frac{1}{2}(F_2+F_4).
	\end{aligned}
\end{equation}
Clearly, $\tr(M_{a|x})=1$ for all $a,x$. A classical model for measurements can be built by considering two measurement devices $\lambda\in\{C, F\}$ which operate in the computational and the Fourier basis, respectively. We call the devices with equal probability $q(\lambda)=1/2$ and fix the following deterministic strategies for the post-processing rule $p(a|x, k, \lambda)$:



\begin{equation}
	\begin{aligned}
		p(1|1, k, C)&=1,\quad \text{for} \quad k=1,2,  \\
		p(2|1, k, C)&=1,\quad \text{for} \quad k=3,4,  \\
		p(a|1, k, C)&=0,\quad \text{for} \quad a=3,4,  \quad \forall k,\\
		p(3|1, k, F)&=1,\quad \text{for} \quad k=1,2,  \\
		p(4|1, k, F)&=1,\quad \text{for} \quad k=3,4,  \\
		p(a|1, k, F)&=0,\quad \text{for} \quad a=1,2,  \quad \forall k,
	\end{aligned}
\end{equation}
and similarly for $x=2$.  Let us consider $\rho_{(a,x)}=M_{a|x}$ and let us see that no classical simulation model for states is possible. Indeed, if  
a simulation exists:
\begin{equation}
	\rho_{(a,x)}= \sum_{\lambda} q(\lambda)
	\tau_{(a,x)|\lambda}=\sum_{\lambda} q(\lambda)
	\sum_{k=1}^4 p(k|a,x,\lambda)E_{k|\lambda},
\end{equation}
where we have taken the simulation to be finite in $\lambda$. For the infinite case, an analogous argument holds. Let $v$ be a vector orthogonal to $\rho_{(a,x)}$, then
\begin{equation}
	\expval{\rho_{(a,x)}}{v}=0\iff \sum_{\lambda} q(\lambda)
	\expval{ \tau_{(a,x)|\lambda}}{v}=0 \iff \expval{ \tau_{(a,x)|\lambda}}{v}=0 \quad \forall \lambda.
\end{equation}
Therefore,  by contrapositive if $v\in \text{supp}(\tau_{(a,x)|\lambda})$, then $v\in  \text{supp}(\rho_{(a,x)})$.  

Let us fix the value $\lambda=\tilde \lambda$ and call $\{E_{k}\equiv E_{k|\tilde \lambda}\}_k$.  For any  pair $(a,x)$, since $\tr(\tau_{(a,x)|\tilde \lambda})=1$, there must exist a value $k_{(a,x)}$ for which $E_{k_{(a,x)}}\in \text{supp}(\tau_{(a,x)|\tilde \lambda})\implies E_{k_{(a,x)}}\in \text{supp}(\rho_{(a,x)})$. In particular, taking $(a,x)\in \{(1,1), (1,2)\}$, there must exist values $k_{(1,1)}, k_{(1,2)}$ such that
\begin{equation}
	\left.
	\begin{aligned}
		E_{k_{(1,1)}}\in \text{span}\{C_1,C_2\}\\
		E_{k_{(1,2)}}\in \text{span}\{C_1,C_3\}
	\end{aligned}
	\right\}\implies 
	\left\{
	\begin{aligned}
		k_{(1,1)}=k_{(1,2)} &\implies E_{k_{(1,1)}}=E_{k_{(1,2)}}=C_1,\\
		k_{(1,1)}\neq k_{(1,2)} &\implies [E_{k_{(1,1)}}=C_2] \vee [E_{k_{(1,2)}}=C_3].
	\end{aligned}
	\right.
\end{equation}
This is, at least one vector of the $\tilde \lambda$ basis is a computational vector.  

In the same fashion, if we take $(a,x)\in\{(3,1),(3,2)\}$, we conclude that at least one vector of the $\tilde \lambda$ basis is a Fourier vector. Nonetheless, $\abs{\braket{a}{f_b}}\neq0,1$ for all $a,b$, so they can never be  part of the same basis and we get into a contradiction. Thus, no classical simulation model for states exists for this set:
\begin{equation}
	\text{Classical simulation of measurements } M_{a|x} \centernot \implies \text{Classical simulation of states } \rho_{(a,x)}.
\end{equation}

\subsection{Proof or implication 2}

Let us give the counterexample now when $\tr(M_{a|x})\neq1$.  Consider the single SIC-POVM $\mathbf{M} = \lbrace M_a \rbrace_{a=1}^{4}$ in dimension $d=2$ \cite{Renes2004}. Let $\lbrace \ket{\psi_a}\rbrace_{a=1}^{4}$ be the fiducial vectors from which the SIC-POVM can be constructed, and define $\lbrace \Pi_{a} = \ketbra{\psi_a}{\psi_a} \rbrace_{a=1}^{4}$. Then,
\begin{equation}\label{App_qubit_SIC}
	\mathbf{M} = \left\lbrace M_{a} = \frac{\Pi_{a}}{2}\right\rbrace_{a=1}^{4}, \quad \sum_{a=1}^{4} M_{a} = \openone, \quad \tr(M_{a}) = \frac{1}{2}.
\end{equation}
Now assume that each measurement is subject to isotropic noise, i.e. $v M_{a} + \frac{1-v}{d}\tr(M_{a})\openone$. For a single measurement setting, the simulation model in \eqref{App_Measurement_simulability} reduces to the one presented in \cite{Oszmaniec2017}, where it is also shown that a model always exists for the qubit noisy SIC-POVM if and only if $v \leq \sqrt{\frac{2}{3}} \coloneqq v_{\text{meas}}$.

The equivalence with classical simulation of states would imply that a model also exists for the associated ensemble of states subject to isotropic noise, $\mathcal{E} = \lbrace v\Pi_{a} + \frac{1-v}{d}\openone \rbrace_{a=1}^{4}$. Since for qubits classical simulability of states is equivalent to joint measurability of the associated set of binary measurements $\lbrace \Pi_{a}, \openone - \Pi_{a} \rbrace_{a=1}^{4}$ \cite{Cobucci2025}, we can compute the critical visibility for which a simulation is possible through a semidefinite program \cite{Uola_2020}. This leads to a simulation for $v \leq \sqrt{\frac{1}{3}} \coloneqq v_{\text{states}}$.
Therefore, since $v_{\text{meas}} > v_{\text{states}}$, 
\begin{equation}
	\text{Classical simulation of measurements } M_{a} \centernot \implies \text{Classical simulation of states } \Pi_{a}.
\end{equation}

\subsection{Proof of implication 3}
Consider that we have an ensemble of states $\mathcal{E} = \lbrace \rho_x \rbrace_{x=1}^{m}$ for which it is possible to find a classical simulation model \cite{Cobucci2025}, i.e.
\begin{equation}\label{App_cl_sim_states}
	\rho_{x} = \int d\lambda \, q(\lambda) \, \tau_{x|\lambda}, \quad \text{where } [\tau_{x|\lambda},\tau_{x'|\lambda}] = 0.
\end{equation}
The commutation property implies that it is possible to decompose the states $\tau_{x|\lambda}$ as
\begin{equation}
	\tau_{x|\lambda} = \sum_{k=1}^{d} p(k|x,\lambda) \ketbra{\phi_{k|\lambda}},
\end{equation}
where $\lbrace \ket{\phi_{k|\lambda}} \rbrace_{k}$ denotes a set of orthonormal bases for each $\lambda$. Therefore, the classical simulation model is
\begin{equation}\label{App_classical_simulation_states}
	\rho_{x} = \int d\lambda \, q(\lambda) \, \sum_{k=1}^{d} p(k|x,\lambda) \ketbra{\phi_{k|\lambda}}.
\end{equation}
We want to prove that it is possible to find a simulation model for the associated binary measurement $\mathbf{M} = \lbrace M_{0|x} = \rho_{x}, M_{1|x} = \openone - \rho_x \rbrace_{x=1}^{m}$, i.e.
\begin{equation}\label{App_Measurement_simulability}
	M_{a|x} = \int d\lambda \, q(\lambda) \, \sum_{k=1}^{d} p(a|x,k,\lambda) E_{k|\lambda},
\end{equation}
where $E_{k|\lambda} = \ketbra*{e_{k|\lambda}}e_{k|\lambda}$ are rank-1 projectors such that $E_{k|\lambda}E_{k'|\lambda} = \delta_{k,k'}E_{k|\lambda}$.

From \cite{Cobucci2025}, we know that if a classical simulation model \eqref{App_classical_simulation_states} exists for the ensemble $\mathcal{E} = \lbrace \rho_{x} \rbrace_{x=1}^{m}$, then it also exists for the ensemble $\mathcal{E'} = \mathcal{E} \cup \lbrace \frac{\mathds{1} - \rho_x}{d-1} \rbrace_{x=1}^{m}$, where
\begin{equation}
	\frac{\openone - \rho_x}{d-1} = \int d\lambda \, q(\lambda) \, \sum_{k=1}^{d} \frac{1-p(k|x,\lambda)}{d-1} \ketbra{\phi_{k|\lambda}} \quad \implies \quad \openone - \rho_x = \int d\lambda \, q(\lambda) \, \sum_{k=1}^{d} (1-p(k|x,\lambda)) \ketbra{\phi_{k|\lambda}}.
\end{equation}
Now we can define $p(0|x,k,\lambda) \coloneqq p(k|x,\lambda)$ and $p(1|x,k,\lambda) \coloneqq 1-p(k|x,\lambda)$. Since they satisfy $\sum_{a=0}^{1} p(a|x,k,\lambda) = 1$, they are a proper probability distribution. Therefore, we can write that
\begin{equation}
	M_{0|x} = \rho_x = \int d\lambda \, q(\lambda) \, \sum_{k=1}^{d} p(0|x,k,\lambda) \ketbra{\phi_{k|\lambda}}, \quad M_{1|x} = \openone - \rho_{x} = \int d\lambda \, q(\lambda) \, \sum_{k=1}^{d} p(1|x,k,\lambda) \ketbra{\phi_{k|\lambda}},
\end{equation}
which is exactly the protocol for classical simulation of measurements \eqref{App_Measurement_simulability}.

To see that the contrary is not true, we take the states $\{\rho_x\}_{x=1}^4$ in dimension $d=3$ given by:
\begin{equation}
	\begin{aligned}
		\rho_1&=\frac{1}{2}(C_1+C_2),&
		\rho_2&=\frac{1}{2}(C_1+C_3),\\
		\rho_3&=\frac{1}{2}(F_1+F_2),&
		\rho_4&=\frac{1}{2}(F_1+F_3).
	\end{aligned}
\end{equation}
Notice that $\rho_1=M_{1|1},\rho_2=M_{1|2},\rho_3=M_{3|1},\rho_4=M_{3|2}$ from the previous subsection. Hence, the same exact reasoning shows they are classically simulable as measurements but not as states.

\subsection{Proof of implication 4}
Let us consider $d=2$.  Then, if $\tr(M_{a|x})=1$, since $\sum_a M_{a|x}=\openone$ we must have that the POVMs $\{M_{a|x}\}$ are binary: $\{M_{0|x},\openone-M_{0|x}\}_x$.  Notice then that $\rho_{(0, x)}\equiv M_{0|x}$ and $\rho_{(1, x)}\equiv\openone-M_{0|x}$ are indeed states.  Hence,  applying the result of the previous section, we have
\begin{equation}
	\text{Classical simulation of states } \{\rho_{(a,x)}\}_{a,x} \implies \text{Classical simulation of measurements }  \{M_{0|x},\openone-M_{0|x} \}_x.
\end{equation}

For the other implication, we use that classical simulation of states for $d=2$ is equivalent to joint measurability of the binarization \cite{Cobucci2025}. In the main text we proved that if a set of measurement is classically simulable, then they are jointly measurable, i.e. classically simulable as states:
\begin{equation}
	\begin{aligned}
		\text{Classical simulation of measurements } \{M_{0|x},\openone-M_{0|x}\}_x \implies& \text{Joint measurability of measurements }  \{M_{0|x},\openone-M_{0|x}\}_x \\
		\iff& \text{Classical simulation of states } \{\rho_{(a, x)}\}_{a,x}.
	\end{aligned}
\end{equation}

\section{Analytical witness for qubit measurements}\label{App_qubit_proof}
In this section we prove an analytical solution to the maximisation problem
\begin{equation}\label{App_beta_witness}
	\max_{\lbrace E_{k} \rbrace} \sum_{k=1}^{d} \tr(\sum_{a,x}D(a|x,k) \mathcal{O}_{ax} E_{k}),
\end{equation}
over qubit rank-one projective measurements $\{E_1,E_2\}$, where $D(a|x,k)$ is a specific deterministic strategy mapping $(x,k)$ to $a$.
First, let us define the operators
\begin{equation}
	\mathcal{O}_{\pm}=\frac{1}{2}\sum_{a,x}\left(D(a|x,1)\pm D(a|x,2)\right)\mathcal{O}_{ax}.
\end{equation}
It is straightforward to check that 
\begin{equation}
	\begin{aligned}
		\sum_{a,x} D(a|x,\,1)\mathcal{O}_{a\,x}&=\mathcal{O}_{+}+\mathcal{O}_{-},\\
		\sum_{a,x} D(a|x,\,2)\mathcal{O}_{a\,x}&=\mathcal{O}_{+}-\mathcal{O}_{-}.
	\end{aligned}
\end{equation}
Replacing both relations in \eqref{App_beta_witness} leads to the following optimization problem:
\begin{equation}
	\max_{\lbrace E_1,E_2 \rbrace}\left\{ \tr\left(\mathcal{O}_{+}\left(E_1+E_2\right)\right)+ \tr\left(\mathcal{O}_{-}\left(E_1-E_2\right)\right)\right\}=\tr\left(\mathcal{O}_{+}\right)+\max_{\lbrace E_1,E_2 \rbrace}\left\{\tr\left(\mathcal{O}_{-}\left(E_1-E_2\right)\right)\right\},
\end{equation}
where in the first step we have used that $E_1+E_2=\openone$. The final optimization can be performed by taking $E_1,E_2$ as the eigenvectors associated with the largest and smallest eigenvalues of $\mathcal{O}_{-}$, $\lambda_1$ and $\lambda_2$, respectively. Indeed,  for $E_k=\ketbra*{e_k}$:
\begin{equation}
	\tr\left(\mathcal{O}_{-}\left(E_1-E_2\right)\right)= \ev{\mathcal{O}_{-}}{e_1}- \ev{\mathcal{O}_{-}}{e_2}\leq \lambda_1\left(\mathcal{O}_{-}\right)-\lambda_2\left(\mathcal{O}_{-}\right),
\end{equation}
where we have used that $\lambda_2 \left(\mathcal{O}_{-}\right)\leq  \ev{\mathcal{O}_{-}}{\psi} \leq \lambda_1\left(\mathcal{O}_{-}\right)$ for any state $\ket{\psi}$.
Therefore,
\begin{equation}
	\max_{\lbrace E_1,E_2 \rbrace}\left\{\tr\left(\mathcal{O}_{-}\left(E_1-E_2\right)\right)\right\}=\lambda_1\left(\mathcal{O}_{-}\right)-\lambda_2\left(\mathcal{O}_{-}\right).
\end{equation}
Notice that this expression gives the maximum independently of the signs of  $\lambda_i\left(\mathcal{O}_{-}\right)$. In addition, since we are dealing with 2-dimensional matrices, this expression can be related to matrix invariant, namely the trace and the determinant. Indeed,  it holds that
\begin{equation}
	\lambda_1\left(\mathcal{O}_{-}\right)-\lambda_2\left(\mathcal{O}_{-}\right)=\sqrt{\left(\lambda_1\left(\mathcal{O}_{-}\right)+\lambda_2\left(\mathcal{O}_{-}\right)\right)^2-4\lambda_1\left(\mathcal{O}_{-}\right)\lambda_2\left(\mathcal{O}_{-}\right)}=\sqrt{\tr(\mathcal{O}_{-})^2-4\det(\mathcal{O}_{-})}.
\end{equation}
In consequence,  now the value of \eqref{App_beta_witness} reads
\begin{equation}
	\max_{\lbrace E_{k} \rbrace} \sum_{k=1}^{d} \tr(\sum_{a,x}D(a|x,k) \mathcal{O}_{ax} E_{k}) = \tr(\mathcal{O}_{+})+\sqrt{\tr(\mathcal{O}_{-})^2-4\det(\mathcal{O}_{-})},
\end{equation}
which completes the proof.

\section{Non-disturbance does not implies classicality}\label{App_lower_dim_CS}

In this Appendix we prove that non-disturbance does not imply classicality for sets of POVMs in dimension $d \geq 5$.  For this purpose, we assume that non-disturbance implies classicality and we show that this leads to a contradiction.

Consider a $d$-dimensional POVM $\{\tilde M_a\}_{a=1}^{o}$ that, if performed twice in sequence,  disturbs itself (later we will take this POVM to be the trine measurement \cite{Peres1990} to provide a specific case of the theorem). From the implication $\text{Classical model} \Rightarrow \text{Non-disturbance}$, it is not classically simulable. Now construct the extended POVM
\begin{equation}\label{App_CounterEx}
	\mathbf{M}\equiv \lbrace M_a=\ketbra{a} \oplus \tilde M_a \rbrace_{a}, \quad a = 1,\ldots,o.
\end{equation}
This measurement acts on a ($o$+$d$)-dimensional Hilbert space and is non-disturbing, since we can consider the instrument $\mathcal{I}_a^\dagger (\mathcal{X})=\tr(\mathcal{X}\ketbra{a}\oplus 0_d)M_a$.  Thus, assuming the antithesis, we must have that $\{M_a\}_{a=1}^{o}$ is classically simulable. As we will prove later, if an extented measurement $\lbrace M_a\rbrace_{a=1}^{o}$ of the form \eqref{App_CounterEx} is classically simulable, then the initial POVM $\lbrace \tilde{M}_a \rbrace_{a=1}^{o}$ is also classically simulable. Therefore, using again that $\text{Classical model} \Rightarrow \text{Non-disturbance}$, we conclude that $\{\tilde M_a\}_{a=1}^{o}$ is non-disturbing when performed twice in sequence. This last statement is in contradiction with the initial hypothesis, i.e. that  $\{\tilde M_a\}_{a=1}^{o}$ disturbs itself. Therefore, we get
\begin{equation}
	\text{Non-disturbance} \quad \not\Rightarrow \quad\text{Classical model}
\end{equation}
as long as a self-disturbing measurement $\{\tilde M_a\}_{a=1}^{o}$ exists for some dimension. One example is the case of the extremal trine qubit measurement $\mathbf{M}^{\text{trine}}\equiv \lbrace M_a = \frac{2}{3}\ketbra{\psi_a}\rbrace_a$, where $\ket{\psi_a} = \cos(a\pi/3)\ket{0} + \sin(a\pi/3)\ket{1}$ and $a=\lbrace 1,2,3 \rbrace$ \cite{Peres1990}. Since $[M_a,M_b] \neq 0$ for $a\neq b$, the trine measurement disturbs itself (indeed, for extremal POVMs, our problem of non-disturbance reduces to the one in \cite{Heinosaari2010}). Therefore, using this POVM in \eqref{App_CounterEx}, we get an example of a non-disturbing measurement in dimension $5$ that is not classically simulable.

Now we will show that $\{\tilde M_a\}_{a=1}^{o}$ is classically simulable if and only if $\{M_a\}_{a=1}^{o}$ is also classically simulable.  In the proof we will consider the classical simulation protocol without the post-processing introduced in Appendix \ref{Post_processing_omission}.

For the sufficient condition, we assume that $\lbrace \tilde{M}_{a} \rbrace_{a=1}^{o}$ is classically simulable, i.e.
\begin{equation}
	\tilde{M}_{a} = \int d\lambda \, q(\lambda) \, \tilde{F}_{a|\lambda},
\end{equation}
for some probability distribution $q(\lambda)$ and a set of projectors $\lbrace \tilde{F}_{a|\lambda}\rbrace_{a,\lambda}$ for which $\tilde{F}_{a|\lambda}\tilde{F}_{a'|\lambda} = \tilde{F}_{a|\lambda} \delta_{a,a'}$ for each $\lambda$. Then, a classical simulation protocol for the measurement $\{M_a=\ketbra{a} \oplus \tilde M_a\}_{a=1}^{o}$ can be built using a new set of projectors $\lbrace F_{a|\lambda} \equiv \ketbra{a} \oplus \tilde{F}_{a|\lambda} \rbrace_{a,\lambda}$. Indeed,
\begin{equation}
	F_{a|\lambda}F_{a'|\lambda} = (\ketbra{a} \oplus \tilde{F}_{a|\lambda})(\ketbra{a'} \oplus \tilde{F}_{a'|\lambda}) = \delta_{a,a'} \ketbra{a} \oplus \delta_{a,a'} \tilde{F}_{a|\lambda} = \delta_{a,a'} F_{a|\lambda}, \quad \text{for each } \lambda.
\end{equation}

For the necessary condition, we assume that a classical simulation model exists for $\{M_a=\ketbra{a} \oplus \tilde M_a\}_{a=1}^{o}$, i.e.
\begin{equation}\label{App_simulation_a_M}
	\ketbra{a} \oplus \tilde M_a = \int d\lambda \, q(\lambda) \, F_{a|\lambda},
\end{equation}
where now the projectors $\lbrace F_{a|\lambda} \rbrace_{a,\lambda}$ live in the $(o+d)$-dimensional Hilbert space $\mathcal{H}_{(o+d)}$. Denote as $\lbrace \ketbra{a} \oplus 0\rbrace_{a}$ a set of projectors acting non-trivially only on the $o$-dimensional subspace of $\mathcal{H}_{(o+d)}$ and such that $\lbrace \ket{a} \rbrace_{a}$ is a basis for this subspace. Then,
\begin{equation}
	\delta_{a,a'}\delta_{a,a''} = (\bra{a'} \oplus 0_d) \ketbra{a} \oplus \tilde M_a (\ket{a''} \oplus 0_d) = \int d\lambda \, q(\lambda) \, (\bra{a'} \oplus 0_d) F_{a|\lambda} (\ket{a''} \oplus 0_d), \quad \forall a,a',a'',
\end{equation}
from which we deduce that, since $\lbrace F_{a|\lambda} \rbrace_a$ are projectors, the non-zero matrix elements on the $o$-dimensional subspace of $\mathcal{H}_{(o+d)}$ are diagonal, i.e.
\begin{equation}
	F_{a|\lambda} = \ketbra{a} \oplus \tilde{F}_{a|\lambda}, \quad \forall a.
\end{equation}
Since $F_{a|\lambda}F_{a'|\lambda} = F_{a|\lambda} \delta_{a,a'}$, then also $\tilde{F}_{a|\lambda}\tilde{F}_{a'|\lambda} = \tilde{F}_{a|\lambda} \delta_{a,a'}$. Therefore, applying in \eqref{App_simulation_a_M} the projector $\tilde{\Pi}$ onto the $d$-dimensional subspace associated with $\lbrace \tilde{M}_{a} \rbrace_{a}$, we get
\begin{equation}
	\tilde{M}_{a} = \tilde{\Pi}M_{a} \tilde{\Pi} = \int d\lambda \, q(\lambda) \tilde{\Pi} F_{a|\lambda} \tilde{\Pi} = \int d\lambda \, q(\lambda) \tilde{F}_{a,\lambda},
\end{equation}
which is a classical simulation model for $\lbrace \tilde{M}_{a} \rbrace_{a}$.

\end{document}